 \journalname{arXiv}
\begin{document}

\title{Portfolio optimization in the case of an exponential utility function and in the presence of an illiquid asset
}

\titlerunning{Portfolio optimization in the case of an exponential utility function}        

\author{Ljudmila A. ~Bordag
}

\institute{Ljudmila A. ~Bordag\at
              Faculty of  Natural and Environmental Sciences, University of Applied Sciences Zittau/G\"orlitz,
Theodor-K\"orner-Allee 16,
D-02763 Zittau, Germany \\
              \email{ljudmila@bordag.com}
}

\date{Received: \today / Accepted: date}

\maketitle

\begin{abstract}
We study an optimization problem for a portfolio with a risk-free, a liquid, and an illiquid risky asset. The illiquid risky asset is sold in an exogenous random moment with a prescribed liquidation time distribution.  The investor prefers a negative or a positive exponential utility function.  We prove that both cases are connected by a one-to-one analytical substitution and are identical from the economic, analytical, or Lie algebraic points of view.

It is well known that the exponential utility function is connected with the HARA utility function through a limiting procedure if the parameter of the HARA utility function is going to infinity. We show that the optimization problem with the exponential utility function is not connected to the HARA case by the limiting procedure and we obtain essentially different results.

For the main three dimensional PDE with the exponential utility function we obtain the complete set of the nonequivalent Lie group invariant reductions to two dimensional PDEs according to an optimal system of subalgebras of the admitted Lie algebra. We prove that in just one case the invariant reduction is consistent with the boundary condition. This reduction represents a significant simplification of the original problem.

\keywords{portfolio optimization \and illiquidity  \and Lie group analysis \and invariant reductions}
 \subclass{ 22E60 \and 35Q93 \and 35K55 \and 91G10}
\end{abstract}

\section{Introduction}
\label{intro}
We study an optimization problem for a portfolio with an illiquid, a liquid risky, and a risk-free asset in the framework of continuous time. We suppose that the illiquid asset is sold in an exogenous random moment $T$ with a prescribed liquidation time distribution. When we started  investigating an optimization problem where the time-horizon is an exogenous random variable we have had in mind previous crises and some typical regulations in EU countries. But now the corona crises make this problem setting highly actual for most countries in Europe. It seems that in a few weeks the problem setting described in this paper will be very actual for many households, small and middle entrepreneurs as in summertime a lot of them will fear default. The previous crises showed that the situation will force them to sell their factory or house because they cannot pay obligations in time. It was a rather typical situation in previous crises. As it is set up now one cannot sell one piece at a time from own house or own small shop or factory; one cannot go with the price down, because it is just on the smallest possible level, and one needs a bit liquidity to stay alive and close most of own debts. During crisis time many investors change the risk tolerance dramatically and their behavior changes towards a strong conservative one. One can find a lot of references in the short review \cite{DiazEsparcia2019}  devoted to assessing risk tolerance in dependence on economic cycles as well as a discussion about different parameter choices and forms of utility functions and building financial models. We suppose that during an economic crisis some investors will prefer to use instead of a HARA utility function a CARA utility function, for instance an exponential utility function.

There is a huge amount of papers devoted to the classical optimal investment problem with a random endowment  as further generalizations of the famous Merton’s model like in \cite{Bouchard2004}, \cite{elkaroui:hal-00708493}. The main difference in our problem setting is that the portfolio includes an illiquid asset with a prescribed liquidation time distribution. For the first time the optimization problem in this form with the HARA utility function was introduced in the paper [4]  and later studied in papers  \cite {Bordag2016}, \cite {BordagYamshchikov2017},  \cite {BordagYamshchikovBook}.
If the illiquid asset in the portfolio is a real estate, a factory, a plant, or store then you can sell it as a whole only. It is a typical situation with the selling of houses, small factories or shops that you know the market situation and a typical liquidation time distribution. If it is the seller market then it will be similar to an exponential distribution. If it is the buyer market then the liquidation time distribution will be similar to a Weibull distribution. We study in this paper both liquidation time distributions as special cases.

The influence of the risk tolerance preferred by the investor on the solution of an optimization problem was studied before for different portfolio settings with and without an illiquid asset. We primary consider the results in connection with the selection of the utility function.

In the paper \cite{MoninZa2014} the authors started with the classical Merton's optimization problem used in \cite{Merton1969}. The portfolio contains one liquid risky asset and a risk-free money market account. The trading takes place within the fixed finite time horizon. The authors explore the question of risk management under different risk preferences of the investor. They study the optimal wealth process and the portfolio process across different utilities and provide transformations between two such processes corresponding to two arbitrary utilities. It is possible to find a deterministic transformation using the local absolute risk tolerance function associated with the corresponding utility function. This transformation is defined by a solution of a linear heat equation with the risk tolerance function as a coefficient by the second spatial derivative. Because of the classical problem setting it is possible to study the influence of the chosen utility and the risk tolerance on the wealth process and the different characteristics of the optimal portfolio in detail. The authors prove that the main role plays the curvature of the risk tolerance function of the preferred utility function. Certainly we cannot expect such tractability from an optimization  problem with an illiquid asset, but we can use this model as a benchmark for the case if the volume of the illiquid position of the studied portfolio vanishes.

The dependence of optimal liquidation strategies from the risk aversion of investors was studied in the work of Schied and Sch{\"o}neborn \cite{Schied2009}. The authors consider the infinite time horizon in the optimal portfolio liquidation problem and use a stochastic control approach. In this model a large investor trades one risky and one risk-free asset. Thereby due to insufficient liquidity of the risky asset the investor`s trading rate moves the market price for the risky asset.
The authors obtain for the value function and the optimal strategy nonlinear parabolic PDEs.  They have to determine the adaptive trading strategy that maximizes the expected utility of the proceeds of a large asset sale. Withal authors studied the financial influence of different types of investor's utility functions. They found that the optimal strategy is aggressive or passive in-the-money in dependence on the investor`s risk tolerance, i.e. if the utility function displays increasing or decreasing risk aversion. The authors proved that such strategies are rational for investors with different absolute risk aversion profiles.

Another approach to a liquidation problem of an illiquid asset is provided in paper \cite{Monin2014}. It is devoted to the problem of how efficiently liquidate large assets positions up to an exogenous fixed terminal time. The author supposes that the investor prefers the exponential utility function and seeks to maximize the expected utility of the terminal value of his wealth. The portfolio contains an illiquid asset called a primary risky asset, a liquid asset that is imperfectly correlated with the primary asset and is called a proxy risky asset as well as a riskless money market account that pays zero interest rate. In practice the investor tries to reduce the price impact by trading a large number of assets and to hedge market risk of the liquidated portfolio.  As a common strategy one chooses splitting of the given order into smaller pieces and to trade these pieces sequentially over time. The author can find optimal strategies explicitly and study their properties. The strategies depend on time and parameters of the model only and are solutions of a linear ordinary differential equation of the second order. The author proves that this case is a generalization of the original Merton's model studied in \cite{Merton1969}. He also noticed that the explicit and simple results for optimal strategies were possible to obtain just by using finite terminal time and because of the investor used the exponential  utility function. A more realistic setting, for instance where the investor receives multiply orders at random times or the liquidation time is not fixed in the beginning leads to an essential more complicated model. In comparison to our case the illiquid asset in \cite{Monin2014} does not pay any dividends and the investor can also split the illiquid asset and sell them piece by piece as well as the investor has no consumption during the lifetime of the portfolio.

Study of optimization problems with three assets including an illiquid asset leads to three dimensional nonlinear Hamilton--Jacobi--Bellman (HJB) equations. The corresponding nonlinear three dimensional partial differential equations (PDEs) include a lot of parameters describing the behavior of assets and are challenging for analytical and numerical methods. To simplify the investigated problem one tries to find an inner symmetry of such equation and reduce the number of independent variables at least to two or if possible to one. Usually, low dimensional problems better studied and are, therefore, easier to handle. We use in this paper the powerful method of Lie group analysis. This method is very well known for more than 100 years. It is very often used in the area of mathematical physics and over the past 20 years also in financial mathematics \cite{Bordag2015}. Nearly all known explicit solutions for ODEs or PDEs were found or can be found algorithmically by this method. This method is up to now the most appropriate method to find algorithmically substitutions to reduce a high dimensional PDE to a lower dimensional one or even to an ODE. The Lie algebraic structure of the corresponding HJB equation is one of the two main results of this paper. The Lie algebraic structure of the HJB equation reveals important structural properties of the considered equation. For instance, for the linear Black-Scholes equation the corresponding Lie algebraic structure gives rise to famous substitutions which reduce it to the heat equation or it allowed to obtain the fundamental solution and the explicit formulas for European Call- or Put. For a nonlinear equation like studied here we cannot get any fundamental solution but we can obtain reductions of the high dimensional HJB equation to simpler ones. We study the complete set of all possible reductions and describe the unique reduction to two dimensional PDE which satisfies the boundary conditions. We present also the explicit form of the corresponding investment-consumption strategies in invariant variables.This reduction also means that for all further investigation it is sufficient to use the two dimensional PDE instead of the three dimensional main HJB equation.

Our paper is organized as follows. In Section 2 we introduce the economic problem setting in detail. There is also provided a theorem that the HJB equation with a HARA utility function possesses a unique viscosity solution which was earlier proved in \cite{BordagYamshchikovZhelezov}.  It will be now adapted to the case exponential utility function. In Section 3 we provide the Lie group analyses of the optimization problems with a general liquidation time distribution and different utility functions. We prove that the cases HARA utility and exponential utility are completely different cases. The usual limiting procedure between the HARA and exponential utility functions gives us the wrong results for the corresponding Lie algebraic structures. In Section 4 for the optimization problem with the exponential utility function we chose an optimal system of subalgebras of the admitted Lie algebra and provide the complete set of all invariant reductions of the corresponding three dimensional PDE. Because this is an essential step, the complete prove with all details is given and the meaning of each reduction is explained. In each case we prove if the invariant substitutions are compatible with the boundary condition.
In Section 5 we discuss the connection between different results and see that the radical change of the investment –consumption strategies is connected with the chosen exponential utility function.

\section{Economical problem setting}
\label{sec:1}
 We study an optimization problem for a portfolio in the framework of continuous time. An investor has a portfolio with three assets: an illiquid, a liquid risky,  and a risk-free asset. The investor has an illiquid asset that has some paper value and can not be sold until some moment $T$ that is random with a prescribed liquidation time distribution. The investor tries to maximize her average consumption investing into a liquid risky asset that is partly correlated with the illiquid one. The investor is free to choose a utility function in correspondence with her risk tolerance. In our previous papers  \cite{BordagYamshchikovZhelezov}, \cite{Bordag2016}, \cite{BordagYamshchikov2017} we assumed that the investor  chooses a hyperbolic absolute risk aversion (HARA) utility function or a logarithmic (LOG) utility function as a special case of the HARA utility. Now we suppose that the investor has a  quite different risk tolerance as before and chooses an exponential utility function. We notice that a risk tolerance $R(c)$ of an investor is defined as $R(c)=-\frac{U'(c)}{U''(c)}$ for any utility function $U(c)$. For the HARA  utility function the risk tolerance $R(c)$ is a linear function of $c$ and for the exponential utility function it is a constant.

Because the form of utility functions define the form of HJB equations and the limiting procedures connecting the HARA utility with the logarithmical and exponential utility functions play an important role in this paper we describe these relations in more detail. In the previous papers  \cite{BordagYamshchikovZhelezov}, \cite{Bordag2016}, \cite{BordagYamshchikov2017} and \cite{BordagYamshchikovBook} we used the HARA  and LOG  utility functions and studied connection between both optimization problems. We used the HARA utility function in the form
\begin{equation}\label{harautility}
		U_1^{HARA}(c) = \frac{1-\gamma}{\gamma}\left( \left(\frac{c}{1-\gamma}\right)^\gamma - 1\right), ~~~ 0<\gamma <1.
	\end{equation}
 It is easily to see that as $\gamma \to 0$ then the  HARA utility function written as (\ref{harautility}) tends the to the LOG utility
\begin{equation} \label{HARAlimLOG}
{U_1^{HARA}(c)}_ { \gamma \to 0} {\longrightarrow} U^{LOG}(c)=\ln{c}.
\end{equation}
In common literature is often noticed that we obtain an exponential utility function as a limit case of  a HARA utility function by $\gamma \to \infty$. This assertion is correct just if the HARA utility function takes a special form, for instance, for the HARA utility in the form (\ref{harautility}) it is not the case. It is easy to prove that if we take the HARA utility in the form
\begin{equation}\label{harautility2}
		U_2^{HARA}(c) = \frac{1-\gamma}{\gamma}\left( \frac{a c}{1-\gamma} + 1\right)^\gamma , ~~~ 0<\gamma <1, ~~a>0,
	\end{equation}
then we obtain by the limiting procedure an exponential utility function
\begin{equation}\label{expnegative}
{U_2^{HARA}(c)}_{\gamma \to \infty} {\longrightarrow}  U^{EXPn}(c)=-e^{-a c}.
\end{equation}
 It is so called negative utility function (denoted  as EXPn). The most common form of the the exponential utility function is
\begin{equation}\label{exppositive}
U^{EXPp}(c)=\frac{1}{a}(1-e^{-a c}), ~a >0.
\end{equation}
We call it a positive exponential utility function (and denote it as EXPp). Both negative and positive exponential utility functions differ just by an additive and a multiplicative constant $\frac{1}{a}$. We will study later both optimization problems, with EXPn and EXPp utility functions.

The  forms  $U_1^{HARA}$ and $U_2^{HARA}$ of the HARA utility function are often used and from an economical point of view both of them have properties of a HARA type utility function. From the analytical point of view the HARA utility functions $U_1^{HARA}$ and $U_2^{HARA}$ are different.

In the case of $U_1^{HARA}$ we obtain by limiting transition $\gamma \to 0$ a LOG utility function (\ref{HARAlimLOG}). As we mentioned before the risk tolerance  in this case is equal to
\begin{equation}
R_1(c)=-\frac{U'(c)}{U''(c)}=\frac{c}{1-\gamma}  \label{risktolerance1},
\end{equation}
  and by $\gamma \to 0$ we obtain $R_{LOG}(c)=c$ as it to expect. But we get neither a finite limit by $\gamma \to \infty$ of the function $U_1^{HARA}(c)$ nor a relevant value for the risk aversion $R_1(c)$ in this case.

  In the second case for the utility function $U_2^{HARA}$ (\ref{harautility2}) we  obtain for the risk tolerance the expression
  \begin{equation}
R_2(c)=\frac{a c+ 1-\gamma}{a(1-\gamma)} =\frac{c}{1-\gamma} +\frac{1}{a}. \label{risktolerance2}
\end{equation}

Here  $U_2^{HARA}$ tends  by  the limiting transition $\gamma \to \infty$ to the EXPn utility function (\ref{expnegative}) and the risk tolerance takes a constant value $R_2(c)=a^{-1}$ as it is to expect in the case of an exponential utility function. But here in contradiction to the first case of $U_1^{HARA}$ we cannot obtain any meaningful expression by the limiting procedure  $\gamma \to 0$, it means we do not obtain a transition to the LOG utility function.

 In other words to study the connection between two optimization problems with a HARA utility function and with a logarithmic utility function we should use the HARA utility for instance in the form $U_1^{HARA}(c)$ to be able to provide the limiting procedure $\gamma \to 0$ in all formulas. For the study the connection between two optimization problems with a HARA utility and with an exponential utility we should use another form of the HARA utility, for instance, of type $U_2^{HARA}(c)$ to be able to make the limiting transition for $\gamma \to \infty$ in corresponding formulas.

Because of the relation (\ref{expnegative}) to correct comparison of the results for the optimization problem with the HARA utility function $U_2^{HARA}(c)$  with the results for an optimization problem with an exponential utility function we need to study first the optimization problem with the negative exponential utility function $U^{EXPn}(c)$.
From the other side the positive form of the utility function (\ref{exppositive}) is often used and it may be interesting to provide results also for this form of the utility function. The optimization problems with both utility functions describe economically equivalent situations. We decided to study both cases because of both types exponential utility functions are widely used. We will prove that the optimization problems with the EXPn and EXPp utility functions possesses also equivalent analytical and Lie algebraic structures. There exist one-to-one analytical substitution which provides the equivalence relation between two of these optimization problems which we provide explicitly later in Section \ref{positivenegative}.

We obtain  for the HARA utility function $R(c)=\frac{c}{1-\gamma}$, $ 0< \gamma<1$ where $\gamma$ is a parameter of the HARA utility function (given in the form introduced later in (\ref{harautility})), and $R(c)=c$ for the LOG utility function. It means also that the absolute risk tolerance $R(c)$ is increasing or decreasing with the consumption $c$ in these two cases.

For an exponential utility function (independently for the positive exponential (EXPp) utility function or for the negative exponential (EXPn) utility function) the risk tolerance $R(c)$ is a constant. In other words, all the time the absolute risk tolerance stays unaltered in the framework of these optimization problems.
We see that even though both the LOG  and EXPn utility functions can be regarded as a limit case of the HARA  utility function they describe quite different economical situations: in the first case, the risk tolerance changes with the consumption $c$ and in the case of the EXPn or EXPp utility functions the risk tolerance do not depend on the level of the consumption at all.

It means that now the investor has a constant risk tolerance. Maybe it explains that both optimization problems studied before and presented now have quite different analytic and  Lie algebraic structures as we show it later.

\subsection{Formulation of the optimization problem} \label{problem}

The investor's portfolio includes a risk-free bond $B_t$, a risky asset $S_t$ and a non-traded asset $H_t$ that generates stochastic income, i.e. dividends or costs of maintaining the asset. The liquidation time of the portfolio $T$ is a randomly-distributed continuous variable.
The risk free bank account $B_t$, with the interest rate $r$, follows
	\begin{equation} \label{bond_r}
		dB_t = rB_t\,dt, \, t \leq T,
	\end{equation}
where $r$ is constant. The lower case index $t$ denotes the spot value of the asset at the moment $t$.

 The stock price $S_t$ follows the geometrical Brownian motion
		\begin{equation} \label{asset_S}
		dS_t = S_t(\alpha\, dt + \sigma\, dW_t^{1}), \, t \leq T,
	\end{equation}
with the continuously compounded rate of return $\alpha > r$ and the standard deviation $\sigma$.
The illiquid asset $H_t$, that can not be traded up to the time $T$ and its paper value is correlated with the stock price and is governed by
\begin{equation}
		\frac{dH_t}{H_t} = (\mu - \delta)\,dt + \eta\left(\rho\, dW_t^{1} + \sqrt{1 - \rho^2}\,dW^{2}_t \right), \, t \le T,
	\label{eq:H1}
	\end{equation}
 where $\mu$ is the expected rate of return of the risky illiquid asset, $(W_t^{1}, W_t^{2})$ are two independent standard Brownian motions, $\delta$ is the rate of dividend paid by the illiquid asset, $\eta$ is the standard deviation of the rate of return, and $\rho \in(-1, 1)$ is the correlation coefficient between the stock index and the illiquid risky asset. The parameters $\mu$,  $\delta$, $\eta$, $\rho$ are all assumed to be constant.\\
The randomly distributed time $T$ is an exogenous time and it does not depend on the Brownian motions  $(W_t^{1}, W_t^{2})$. The probability density function of the liquidation time distribution is denoted by $\phi(t)$, whereas $\Phi(t)$ denotes the cumulative distribution function, and  $\overline{\Phi}(t)$, the survival function, also known as a reliability function, $\overline{\Phi}(t) = 1 - \Phi (t)$. We skip here the explicit notion of the possible parameters of the distribution in order to make the formulas shorter.
In dependence on the rate of illiquidity the liquidation time distribution can take different forms. Typically one use the simplest  one parameter exponential distribution with the reliability function $\overline{\Phi}(t)=e^{- \kappa t}$, where $\kappa$ is the parameter of the distribution or a more advanced  Weibull distribution with $\overline{\Phi} (t)=e^{-(t/\lambda)^k}$ with two parameters $\lambda$ and $k$. We will take these two distributions as examples in our investigation. We notice that the exponential  distribution is a special case of the Weibull distribution by $k=1$ and $\kappa =1/ \lambda$.

We assume that the investor consumes at rate $c(t)$ from the liquid wealth and the allocation-consumption plan $(\pi, c)$ consists of the allocation of the portfolio with the cash amount $\pi = \pi(t)$ invested in stocks, the consumption stream $c = c(t)$ and the rest of the capital kept in bonds. The consumption stream $c(t)$ is admissible if and only if it is positive and there exists a strategy that finances it. Further on we sometimes omit the dependence on $t$ in some of the equations for the sake of clarity of the formulas.  All the income is derived from the capital gains and the investor must be solvent. In other words, the liquid wealth process $L_t$ must cover the consumption stream.
The wealth process $L_t$ is the sum of cash holdings in bonds, stocks and {\em random} dividends from the non-traded asset minus the consumption stream, i.e. it  must satisfy the balance equation
\begin{eqnarray}
		 dL_t &=& \bigl(rL_t + \delta H_t + \pi_t(\alpha - r) - c_t\bigr)\, dt + \pi_t\sigma\, dW_t^1. \nonumber
	\end{eqnarray}
The investor wants to maximize the overall utility consumed up to the random time of liquidation $T$, given by
\begin{equation}
		\mathcal{U}(c) := E \left[\int^\infty_0{\overline{\Phi}(t)U(c)}\,dt\right], \label{eq:supinfty}
	\end{equation}
as it was shown in \cite{BordagYamshchikovZhelezov}.
It means we work with the problem (\ref{eq:supinfty}) that corresponds to the \emph{value function} $V(l,h,t)$, which is defined as
\begin{equation} \label{valueFun}
		V(l, h, t) = \max_{(\pi, c) } E \left[ \int_t^\infty \overline{\Phi}(t)U(c)\, dt \;|\; L(t) = l, H(t) = h \right],
	\end{equation}
where $l$ could be regarded as an initial capital and $h$ as a paper value of the illiquid asset. The value function $V(l, h, t)$ satisfies  the HJB equation
\begin{eqnarray}
		V_t (l, h, t) &+& \frac{1}{2}\eta^2h^2\,V_{hh} (l, h, t) + (r l + \delta h)\,V_l (l, h, t) + \nonumber \\
  		& & (\mu - \delta) h\,V_h (l, h, t) + \max_{\pi} G[\pi] + \max_{c \geq 0} H[c]\;\; =\;\; 0, \label{eq:HJB21}\\
		G[\pi] &=& \frac{1}{2}V_{ll}(l, h, t)\,\pi^2 \sigma^2 + V_{hl}(l, h, t)\,\eta\rho\pi\sigma h 
      		+ \pi(\alpha - r)\,V_l(l, h, t), \label{eq:Gmax21} \\
		H[c] &=& -c\,V_l (l, h, t) + \overline{\Phi}(t) U(c), \label{eq:Hmax21}
	\end{eqnarray}
with the boundary condition
\begin{equation} \label{cond:main}
		V(l, h, t) \to 0, ~~ as ~~ t \to \infty.
	\end{equation}
In \cite{Bordag2016} and \cite{BordagYamshchikovZhelezov} the authors have already demonstrated that the formulated problem has a unique solution under certain conditions. Namely, the following theorem was proved

\begin{theorem}\label{MT} \cite{BordagYamshchikovZhelezov}.
There exists a unique viscosity solution of the corresponding HJB equation (\ref{valueFun})-(\ref{cond:main}) if
\begin{enumerate}
\item $U(c)$ is strictly increasing, concave and twice differentiable in $c$,
\item $\lim_{t \to \infty} \overline{\Phi} (t) E[U(c(t))]=0$, $\overline{\Phi} (t) \sim e^{-\kappa t}$ or faster as $t \to \infty$,
\item $U(c) \leq M(1+c)^{\gamma}$ with $0<\gamma<1$ and $M>0$,
\item $\lim_{c \to 0} U'(c) = + \infty$,  $lim_{c \to + \infty} U'(c) = 0$.
\end{enumerate}
\end{theorem}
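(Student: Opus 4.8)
The plan is to follow the standard program for Hamilton--Jacobi--Bellman equations attached to a stochastic control problem: first show that the value function (\ref{valueFun}) is finite and continuous, then identify it as a viscosity solution of the HJB equation (\ref{eq:HJB21})--(\ref{cond:main}) via the dynamic programming principle, and finally obtain uniqueness through a comparison principle.

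First I would verify that $V(l,h,t)$ in (\ref{valueFun}) is well-defined. The sublinear growth bound (3), $U(c)\le M(1+c)^{\gamma}$ with $0<\gamma<1$, together with the decay assumption (2) that $\overline{\Phi}(t)\sim e^{-\kappa t}$ or faster, guarantees that $\int_t^\infty\overline{\Phi}(s)U(c(s))\,ds$ converges along every admissible stream and that its supremum is finite; the same bounds yield local boundedness of $V$ in the state variables, and the normalization (2) delivers the boundary behaviour (\ref{cond:main}), $V\to 0$ as $t\to\infty$. Strict concavity and monotonicity of $U$ in (1) make the two pointwise maximizations inside (\ref{eq:HJB21}) well-posed: $G[\pi]$ in (\ref{eq:Gmax21}) is a concave quadratic in $\pi$ whenever $V_{ll}<0$, so $\max_\pi G[\pi]$ is attained at the explicit optimal allocation and produces the nonlinear term proportional to $(V_{hl}\eta\rho\sigma h+(\alpha-r)V_l)^2/V_{ll}$, while the Inada conditions (4) force the first-order condition $\overline{\Phi}(t)U'(c)=V_l$ to have a unique positive root, so that $\max_{c\ge 0}H[c]$ in (\ref{eq:Hmax21}) is realized by a well-defined consumption feedback and equals a concave conjugate of $\overline{\Phi}(t)U$ evaluated at $V_l$.

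Next I would establish existence. Using the dynamic programming principle for the control problem (\ref{valueFun}) under the solvency constraint encoded in the balance equation for $L_t$, one shows by the customary test-function argument that $V$ is at once a viscosity subsolution and a viscosity supersolution of (\ref{eq:HJB21}). Concavity of $U$ transfers to concavity of $V$ in $(l,h)$, which gives $V_{ll}\le 0$ and thereby legitimizes the explicit evaluation of $\max_\pi G[\pi]$ used above.

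The hard part will be uniqueness through a comparison principle, because (\ref{eq:HJB21}) is only degenerately parabolic: the genuine diffusion is the single term $\frac{1}{2}\eta^2h^2V_{hh}$, while $V_{ll}$ enters only through the controlled quadratic in (\ref{eq:Gmax21}), and after maximization the reduced Hamiltonian carries the singular quotient $V_l^2/V_{ll}$. I would therefore employ the Crandall--Ishii doubling-of-variables method: supposing a subsolution exceeds a supersolution, double the spatial variables, insert a penalization growing at infinity that is compatible with the polynomial growth permitted by (3), and localize the maximum on the unbounded domain. Invoking the maximum principle for semicontinuous functions and estimating the second-order contributions, where the correlation term $V_{hl}\eta\rho\sigma h$ must be controlled using $\rho\in(-1,1)$ so that the diffusion matrix stays positive semidefinite, then yields the contradiction, with the weight $\overline{\Phi}(t)$ supplying the decay in $t$ that closes the estimate at infinity. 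In the present paper this comparison result is precisely the content of \cite{BordagYamshchikovZhelezov}, so it remains only to confirm that the chosen utility meets hypotheses (1)--(4). Here I would emphasize the subtle point behind the word \emph{adapted}: the exponential utility $-e^{-ac}$ satisfies (1)--(3) but has $U'(0)=a<\infty$, so the lower Inada condition in (4) fails and the consumption feedback must switch to the boundary value $c=0$ once $V_l\ge a\,\overline{\Phi}(t)$, which is the modification the theorem requires in the exponential case.
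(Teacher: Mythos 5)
Your outline is sound, but you should know that the paper itself contains no proof of this theorem: it is imported verbatim from \cite{BordagYamshchikovZhelezov}, and the only argument the paper adds is the remark, immediately after the statement, that the proof in that reference still goes through when the Inada condition $\lim_{c\to 0}U'(c)=+\infty$ in (4) is replaced by $U'(0)>0$, which is what makes the result applicable to exponential utilities (condition (\ref{expconditionUtility})). Your sketch --- finiteness and the boundary behaviour (\ref{cond:main}) of the value function from the growth bound (3) and the decay assumption (2), the viscosity sub/supersolution property via dynamic programming, concavity of $V$ legitimizing the pointwise maximization in $\pi$, and uniqueness by a Crandall--Ishii doubling argument on the degenerate equation --- is the standard program and is presumably what the cited reference carries out; but since you ultimately defer the comparison principle to that same reference, your proposal, like the paper, amounts to a citation plus a verification of hypotheses rather than an independent proof. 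Where your write-up genuinely adds something is the last paragraph: you isolate exactly the point on which the paper's adaptation turns, namely that $-e^{-ac}$ satisfies (1)--(3) but has $U'(0)=a<\infty$, so the maximization in (\ref{eq:Hmax21}) over $c\ge 0$ can hit the constraint and the optimal consumption must be truncated to $c=0$ once $V_l\ge a\,\overline{\Phi}(t)$. The paper states the weakened condition but never writes this boundary case out --- its formally maximized equation (\ref{maingeneralExpNegative}) and the consumption formula (\ref{c:maingeneralExpNegative}) silently assume the interior first-order condition --- so your observation is a more careful account of what ``adapted'' means here, and it is the one place where your blind attempt is sharper than the paper's own treatment.
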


In this paper we restrict ourselves to the case of exponential utility functions that satisfy three first conditions of Theorem \ref{MT} by definition. We checked the proof of the theorem in \cite{BordagYamshchikovZhelezov} and see that the condition $\lim_{c \to 0} U'(c) \to \infty$  can be replaced by the condition  $\lim_{c \to 0} U'(c) >0$ and the existence and uniqueness of the viscosity solution of HJB equation is still  guaranteed.

Further we will use instead the fourth condition in Theorem \ref{MT}  the condition
\begin{equation}\label{expconditionUtility}
 U'(c)_{|_{c=0}} >0, ~~~~~ \lim_{c \to + \infty} U'(c) = 0
\end{equation}
which will be satisfied by negative or positive exponential utility functions. It means that there exist a unique viscosity solution to the HJB equation (\ref{valueFun})-(\ref{cond:main}) with an exponential utility function.

Now we can adjust and reformulate Lemma proved in \cite{BordagYamshchikovZhelezov} about the properties of the value function as follows

\begin{lemma}\label{PVF}
Under the conditions $(1) - (3)$ from Theorem \ref{MT} and the condition(\ref{expconditionUtility}) the value function $V(t, l, h)$ (\ref{valueFun}) has the following properties:
\begin{enumerate}
\item[\textup{(i)}] $V( l, h, t)$ is concave and non-decreasing in $l$ and in $h$,
\item[\textup{(ii)}] $V( l, h, t)$ is strictly increasing in $l$,
\item[\textup{(iii)}] $V( l, h, t)$ is strictly decreasing in $t$ starting from some point,
\item[\textup{(iv)}] $0 \leq V(l, h, t) \leq O(|l|^{\gamma} + |h|^{\gamma})$ uniformly in $t$.
\end{enumerate}
\end{lemma}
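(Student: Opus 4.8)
The plan is to follow the structure of the proof given in \cite{BordagYamshchikovZhelezov} for the HARA case, adapting each step to the weaker hypothesis (\ref{expconditionUtility}) and checking that $U'(0)>0$ can everywhere replace the Inada condition $U'(0)=+\infty$ used there. The three structural pillars I would use throughout are: the balance equation for $L_t$ is affine in the state $L_t$ and in the controls $(\pi,c)$; the illiquid price $H_t$ in (\ref{eq:H1}) is a geometric Brownian motion and hence linear in its initial datum $h$; and the weight $\overline{\Phi}$ is non-increasing with the exponential tail assumed in condition (2) of Theorem \ref{MT}. It is convenient to first rewrite (\ref{valueFun}) in time-homogeneous form: since all coefficients $r,\alpha,\sigma,\mu,\delta,\eta,\rho$ are constant, substituting the elapsed time $u=s-t$ gives
\[
V(l,h,t)=\max_{(\pi,c)}E\left[\int_0^\infty \overline{\Phi}(t+u)\,U(c(u))\,du \;\big|\; L(0)=l,\,H(0)=h\right],
\]
so that the entire dependence on $t$ sits in the weight $\overline{\Phi}(t+u)$ while the state dynamics are time-shift invariant.

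For (i) and (ii) I would argue by direct comparison of admissible controls. Monotonicity in $l$ follows because the map $l\mapsto L_t$ is, for fixed controls, non-decreasing (the drift contains $+rL_t$, so a larger initial capital produces a pathwise larger wealth); any plan admissible from $l$ is therefore admissible from $l'>l$ and, adding the surplus to consumption, yields at least as large a value since $U$ is increasing. Monotonicity in $h$ is analogous, using that a larger $h$ scales up $H_t$ and hence the dividend stream $\delta H_t$ entering the balance equation. Concavity is the standard linearity-plus-Jensen argument: because both the balance equation and (\ref{eq:H1}) are linear, the convex combination $\lambda(\pi_1,c_1)+(1-\lambda)(\pi_2,c_2)$ of two controls, applied to the convex combination of the two initial data, produces exactly the convex combination of the two state paths; concavity of $U$ then gives $U(\lambda c_1+(1-\lambda)c_2)\ge \lambda U(c_1)+(1-\lambda)U(c_2)$, and taking expectations shows $V$ is jointly concave. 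For the strict increase in (ii) I would start from an optimizer at $(l,h,t)$ and, at initial capital $l+\Delta l$ with $\Delta l>0$, keep the same $\pi$ while investing the surplus in the bond and consuming it; since $U$ is strictly increasing by condition (1) and $\overline{\Phi}>0$, the resulting value is strictly larger. The point to verify --- and this is the only place the hypothesis is relevant --- is that this comparison never invokes the Inada condition $U'(0)=+\infty$ of the original proof, so that the weaker requirement $U'(0)>0$ in (\ref{expconditionUtility}) is enough; the same remark applies to attainment of the optimum, which is guaranteed by the existence theory recorded after Theorem \ref{MT}.

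For (iii) I would use the time-homogeneous form above. For $t_1<t_2$ the weight satisfies $\overline{\Phi}(t_1+u)\ge\overline{\Phi}(t_2+u)$ for every $u\ge0$, with strict inequality because $\overline{\Phi}$ has a strictly decreasing exponential tail. Feeding an optimal control for $t_2$ into the functional at $t_1$ gives $V(l,h,t_1)\ge V(l,h,t_2)$, and the strictness of the weight inequality together with the strict positivity of the optimal consumption upgrades this to a strict inequality once $t$ is large enough that the optimal consumption is bounded away from zero on a set of positive measure; this is exactly the content of the qualifier ``starting from some point''.

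The genuinely technical step, and the one I expect to be the main obstacle, is the growth bound (iv). The lower bound $V\ge0$ is immediate for the positive exponential utility (\ref{exppositive}), since there $U\ge0$ and any admissible plan gives a non-negative value. For the upper bound I would use condition (3) of Theorem \ref{MT}, $U(c)\le M(1+c)^\gamma$, to dominate the integrand and then control the optimal consumption by the available wealth. Integrating the balance equation and using that $H_t$ is a geometric Brownian motion yields moment bounds $E[L_s],E[H_s]=O\big((|l|+|h|)\,e^{Cs}\big)$; since $x\mapsto(1+x)^\gamma$ is concave, Jensen's inequality gives $E[(1+c(s))^\gamma]\le\big(1+O((|l|+|h|)e^{Cs})\big)^\gamma$. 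The exponential tail $\overline{\Phi}(s)\sim e^{-\kappa s}$ from condition (2) then has to dominate this growth so that $\int_t^\infty \overline{\Phi}(s)\,E[(1+c)^\gamma]\,ds$ converges and scales like $|l|^\gamma+|h|^\gamma$, uniformly in $t$. Making the interplay between the exponential decay of $\overline{\Phi}$ and the exponential growth of the wealth moments fully rigorous --- in particular tracking that only the homogeneity degree $\gamma$ survives and that the estimate is uniform in $t$ --- is the part that requires the careful stochastic estimates, exactly as in \cite{BordagYamshchikovZhelezov}; the remaining changes amount to verifying that none of these estimates used the Inada condition at zero.
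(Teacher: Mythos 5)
Your proposal follows the same route the paper itself takes: the paper does not give a self-contained proof of Lemma~\ref{PVF} at all, but simply ``adjusts and reformulates'' the lemma of \cite{BordagYamshchikovZhelezov}, the only new ingredient being the observation that the Inada condition $\lim_{c\to 0}U'(c)=+\infty$ can be replaced by $U'(0)>0$ --- exactly the check your proposal is organized around. The details you supply (pathwise comparison of admissible plans for monotonicity, linearity of the state dynamics plus Jensen for concavity, moment bounds on $L_s,H_s$ played against the exponential tail of $\overline{\Phi}$ for the growth estimate) are the standard ones and are sound for a \emph{nonnegative} utility.

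There is, however, one step that genuinely fails under the lemma's stated hypotheses. Your argument for (iii) --- ``$\overline{\Phi}(t_1+u)\ge\overline{\Phi}(t_2+u)$, hence feeding the $t_2$-optimal control into the $t_1$-functional gives $V(l,h,t_1)\ge V(l,h,t_2)$'' --- and your lower bound $V\ge 0$ in (iv) both require $U\ge 0$. But the negative exponential utility $U^{EXPn}(c)=-e^{-ac}$ of (\ref{expnegative}) satisfies every hypothesis of the lemma (conditions (1)--(3) of Theorem~\ref{MT} and (\ref{expconditionUtility})), yet is strictly negative: for it the integrand $\overline{\Phi}(s)U(c)$ in (\ref{valueFun}) is negative, so $V<0$, the weight inequality acts in the opposite direction (the value \emph{increases} toward $0$ as $t$ grows, consistent with (\ref{cond:main}) but contradicting (iii)), and the lower bound in (iv) is violated. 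You flag this only for (iv) (``immediate for the positive exponential utility'') and not at all for (iii). To close the gap you must either add $U\ge 0$ as a hypothesis --- which covers EXPp and matches what your comparison arguments actually prove --- or handle EXPn by transporting the EXPp result through the explicit one-to-one substitution (\ref{transform}), under which (iii) and (iv) hold only in a correspondingly shifted form. In fairness, this defect is inherited from the lemma as stated in the paper (which explicitly intends it to cover both EXPn and EXPp); but a proof written from scratch has to confront it, and yours, as written, silently assumes the sign condition precisely where it matters.
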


In the next sections,  we will first study three dimensional PDE which we obtain from the HJB equation after formal maximization, then we will try to simplify this three dimensional PDE as far as possible using its internal algebraic structure.  The properties of the value function listed in Lemma \ref {PVF}  we will use to define the reduction which keeps all properties of the original optimization problem. It follows that if one can find a solution to the reduced equation it will be also the unique viscosity solution of the optimization problem.

\section{Lie group analyses of the optimization  problem with a general liquidation time distribution and an exponential utility function}\label{negativeExp}

First we study the case of an optimization problem with  the EXPn utility function (\ref{expnegative}). As usual we provide a formal maximization of (\ref{eq:Gmax21}) and (\ref{eq:Hmax21}) for the chosen utility function in the HJB equation (\ref{eq:HJB21}) and get a three dimensional nonlinear PDE.

The HJB equation (\ref{eq:HJB21}) after the formal maximization procedure  will take the form
\begin{eqnarray}
		 && V_t(l, h, t) +  \frac{1}{2}\eta^2h^2V_{hh} ( l, h, t) + (r l + \delta h)V_l ( l, h, t) + (\mu - \delta) hV_h ( l, h, t) \nonumber \\
		 &-& \frac{(\alpha - r)^2 V_{l}^2(l,h, t)+2(\alpha - r)\eta \rho h V_{l}(l,h, t) V_{lh}(l,h, t) + \eta^2 \rho^2 \sigma^2 h^2 {V_{lh}}^2 (l,h, t)}{2 \sigma^2 V_{ll}(l,h, t)} \nonumber \\
 		&+& \frac{1}{a}V_{l}(l,h, t) \ln{V_{l}(l,h, t)}  -  \frac{1}{a}\left(1+\ln{ \overline{\Phi}(t)} \right)V_{l}(l,h, t) -\frac{\ln a}{a} ~ V_{l}(l,h, t)  =0 , \label{maingeneralExpNegative}\\
&& ~~~~~~~~~~~~~V \rightarrow 0, ~  t \longrightarrow \infty.\nonumber
	\end{eqnarray}
Here the investment $\pi( l , h, t)$ and consumption $c( l, h, t)$ strategies look as follows in terms of the value function V(l,h, t)
\begin{eqnarray}
		 \pi ( l ,h, t) &=& - \frac{\eta \rho \sigma h V_{lh}( l, h, t) + (\alpha - r) V_l ( l, h, t)}{\sigma^2 V_{ll}(l,h, t)}, \label{pi:maingeneralExpNegative}\\
		 c( l, h, t) &=&\frac{1}{a} \ln{\left( \frac{ \overline{\Phi}(t)}{a V_l ( l,h, t)} \right)}. \label{c:maingeneralExpNegative}
\end{eqnarray}
Equation (\ref{maingeneralExpNegative}) is a nonlinear three dimensional PDE with the three independent variables $l,h,t$ such equations are demanding by study with analytical or numerical methods.
The Lie group analysis of a nonlinear PDE is a proper tool to obtain the Lie algebra admitted by this equation. Using the  generators of this  symmetry algebra one can reduce the dimension of the equation (\ref{maingeneralExpNegative}) and make a problem better tractable.

Roughly speaking  to obtain internal Lie algebraic structure of a differential equation on any function $V(l,h,t)$ we present this equation as an algebraic equation, for instance like $\Delta(l,h,t,V,V_l,V_h,V_t,V_{ll},V_{ll},V_{lh},V_{hh})=0$ in some special space called jet bundle.  This space is denoted by $j^{(n)}$, where $n$ is the order of the highest derivative in the differential equation.  All derivatives will be now considered as new dependent variables.  Thereafter we study the properties of the solution manifold of this equation, which is now a surface in the jet bundle $j^{(n)}$. We take a generator $ \mathbf U$ of a point transformation in the corresponding jet bundle and act on the solution manifold to define invariant subspaces.  One obtains a large system of partial differential equations on the coefficients of the generator $ \mathbf U$.  Usually this system does not has any nontrivial solution at all, and correspondingly the studied differential equation does not admit any symmetry. In seldom cases, one gets nontrivial generators of the point transformations admitted by the equation.  The symmetry properties will then used to simplify the studied equation and one obtains a so called reduced equation. In detail one can find the description of this method in \cite{Olver}, \cite{Ibragimov1994} or in \cite{Bordag2015} where a short and comprehensive introduction in this method is given as well as applications to other PDEs arising in financial mathematics.

Here we formulate the main theorem of Lie group analysis for the optimization problem with the EXPn utility function.

 \begin{theorem}\label{THEXPn}
The HJB equation (\ref{maingeneralExpNegative}) with the EXPn utility function (\ref{expnegative}) and with a general liquidation time distribution ${\Phi} (T)$ admits the four dimensional Lie algebra $L^{EXPn}_4$ spanned by generators $\mathbf U_1, \mathbf U_2, \mathbf U_3, \mathbf U_4$, i.e. $L^{EXPn}_4=<\mathbf U_1, \mathbf U_2, \mathbf U_3, \mathbf U_4>$, where
\begin{eqnarray} \label{symalgenaralExpNegativ}
&& \mathbf U_1 =  \frac{1}{ar}\frac{\partial}{\partial l} -  V \frac{\partial}{\partial V},~~~~~~~~~~~~~~~~~~~~~~~~~~~~~~~~~\mathbf U_2 = \frac{\partial}{\partial V}, \\
&& \mathbf U_3 =  -\frac{1}{a r}\left(e^{r t}\int{e^{-r t}d{\ln{\overline{\Phi}(t)}}}  \right)\frac{\partial}{\partial l} +  \frac{1}{r}\frac{\partial}{\partial t},  ~~~~~~\mathbf U_4 = e^{rt}\frac{\partial}{\partial l}\nonumber.
\end{eqnarray}
with following nontrivial commutation relations
\begin{equation}\label{nonzeroComEXPn}
[\mathbf U_1,\mathbf U_2]= \mathbf U_2, ~~~~~[\mathbf U_3,\mathbf U_4]= \mathbf U_4.
 \end{equation}
 Except finite dimensional Lie algebra $L^{EXPn}_4$ (\ref{symalgenaralExpNegativ}) the equation (\ref{maingeneralExpNegative}) admits also an infinite dimensional algebra $L_{\infty}=<\psi(h,t) \frac{\partial}{\partial V}>$ where the function $\psi(h,t)$ is any solution of the linear parabolic PDE
\begin{equation}\label{inftyAlExpNegative}
 		\psi_t( h,t) +  \frac{1}{2}\eta^2h^2 \psi_{hh} ( h,t)  + (\mu - \delta) h \psi_{h} ( h,t)=0.
	\end{equation}
 \end{theorem}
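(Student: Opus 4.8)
The plan is to apply the standard Lie symmetry algorithm to the three-dimensional PDE (\ref{maingeneralExpNegative}), namely to seek an infinitesimal generator of the general form
\begin{equation}\label{genform}
\mathbf U = \xi^l(l,h,t,V)\frac{\partial}{\partial l} + \xi^h(l,h,t,V)\frac{\partial}{\partial h} + \xi^t(l,h,t,V)\frac{\partial}{\partial t} + \phi(l,h,t,V)\frac{\partial}{\partial V},
\end{equation}
and to demand that its second prolongation $\mathbf U^{(2)}$ annihilate the equation on its solution manifold. First I would compute the prolongation formulae for the needed first and second derivatives $V_l, V_h, V_t, V_{ll}, V_{lh}, V_{hh}$, substitute into the invariance condition $\mathbf U^{(2)}\Delta|_{\Delta=0}=0$, and then split this large expression into an overdetermined linear system of \emph{determining equations} by collecting coefficients of the independent monomials in the derivatives of $V$ (treating those derivatives not fixed by the equation as free variables in the jet space $j^{(2)}$).

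Next I would solve this determining system. I expect the nonlinearity of (\ref{maingeneralExpNegative}) — in particular the quotient term with $V_{ll}$ in the denominator and the $V_l\ln V_l$ term — to force strong constraints: the $\ln V_l$ term is not invariant under generic rescalings of $V$, so matching it should pin down the admissible form of $\phi$ and the $V$-independence of $\xi^l,\xi^h,\xi^t$. I anticipate that the coefficients must reduce to $\xi^h=0$ (the illiquid-asset variable carries no extra symmetry beyond the trivial one), $\xi^t$ depends on $t$ only, and $\xi^l$ is at most affine in the relevant variables, while $\phi$ is forced to be linear in $V$ plus an additive function $\psi(h,t)$. Integrating the resulting ODE constraints should then produce exactly the coefficient functions appearing in $\mathbf U_1,\dots,\mathbf U_4$, including the explicit $e^{rt}$ and the integral $e^{rt}\int e^{-rt}\,d\ln\overline{\Phi}(t)$ tied to the general distribution, together with the linear equation (\ref{inftyAlExpNegative}) governing $\psi(h,t)$, which is precisely the statement that the additive freedom in $V$ generates the infinite-dimensional subalgebra $L_\infty$.

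The main obstacle will be the \textbf{bookkeeping of the determining equations arising from the nonlinear quotient term}, since differentiating $\frac{(\cdots)}{2\sigma^2 V_{ll}}$ and clearing denominators produces cross terms in $V_{ll}, V_{lh}, V_l$ that must be organized carefully so that the split into determining equations is both correct and complete; it is easy to miss a branch or to over-constrain. I would handle this by multiplying through by $V_{ll}$ first to make the equation polynomial in the second derivatives before prolonging, which keeps the monomial structure transparent. Finally, once the four generators are obtained, I would verify the commutation relations (\ref{nonzeroComEXPn}) directly from the Lie bracket $[\mathbf U_i,\mathbf U_j]=\mathbf U_i\mathbf U_j-\mathbf U_j\mathbf U_i$ — a short computation giving $[\mathbf U_1,\mathbf U_2]=\mathbf U_2$ and $[\mathbf U_3,\mathbf U_4]=\mathbf U_4$ with all other brackets vanishing — thereby confirming the four-dimensional structure of $L^{EXPn}_4$ and closing the proof.
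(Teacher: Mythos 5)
Your proposal follows essentially the same route as the paper's proof: the paper likewise posits a generator $\mathbf U=\xi_1\partial_l+\xi_2\partial_h+\xi_3\partial_t+\eta_1\partial_V$, imposes $\mathbf U^{(2)}\Delta|_{\Delta=0}=0$ in the second jet bundle, splits the result into an overdetermined system (130 determining equations, most trivial, reducing to seven nontrivial PDEs), and solves it to find $\xi_2=0$, $\xi_3$ constant, $\xi_1$ affine with the $e^{rt}$ and $e^{rt}\int e^{-rt}\,d\ln\overline{\Phi}(t)$ terms, and $\eta_1$ linear in $V$ plus a function $\psi(h,t)$ satisfying (\ref{inftyAlExpNegative}), exactly as you anticipate. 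Your additional suggestion of clearing the $V_{ll}$ denominator before prolonging is a harmless technical convenience that does not change the argument.
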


\begin{proof}
As in \cite{Olver}, \cite{Ibragimov1994} or \cite{Bordag2015} we introduce the second jet bundle $j^{(2)}$ and present the equation (\ref{maingeneralExpNegative}) in the form
$\Delta(l,h,t,V,V_l,V_h,V_t,V_{ll},V_{ll},V_{lh},V_{hh})=0$ as a function of these variables in the jet bundle $j^{(2)}$. We look for generators of the admitted Lie algebra in the form
	\begin{equation}\label{operatorU}
		\mathbf U=\xi_1(l,h,t,V)\frac{\partial}{\partial l}+\xi_2(l,h,t,V)\frac{\partial}{\partial h} +\xi_3(l,h, t, V)\frac{\partial}{\partial t}+\eta_1(l,h, t, V)\frac{\partial}{\partial V},
	\end{equation}
where the functions $\xi_1,\xi_2,\xi_3,\eta_1$ can be found using the over determined system of determining equations
	\begin{equation}\label{operatorU2}
		\mathbf U^{(2)}\Delta(l,h, t, V,V_l,V_h,V_t,V_{ll},V_{ll},V_{lh},V_{hh})|_{\Delta=0}=0,
	\end{equation}
where $\mathbf U^{(2)}$ is the second prolongation of  $\mathbf U$ in $j^{(2)}$. We look at the action of $\mathbf U^{(2)}$ on   $\Delta(l,h, t, V,V_l,V_h,V_t,V_{ll},V_{ll},V_{lh},V_{hh})$ located  on its solution subvariety $\Delta=0$ and obtain an overdetermined system of PDEs on the functions $\xi_1$, $\xi_2$, $\xi_3$ and $\eta_1$ from (\ref{operatorU}).
This system has 130 PDEs on the functions $\xi_1,\xi_2,\xi_3,\eta_1$. The most of them are trivial and lead to following conditions on the functions
	\begin{eqnarray}
		(\xi_1)_h &=&0,~(\xi_1)_V = 0,~(\xi_1)_{ll} = 0, ~(\xi_1)_{l} = \xi_{11}(t),\nonumber \\
		(\xi_2)_l &=& 0,~(\xi_2)_V = 0, \nonumber \\
		(\xi_3)_l &=& 0,~(\xi_3)_h = 0,~(\xi_3)_V = 0,~ \nonumber \\
		(\eta_1)_l &=& 0,~(\eta_1)_{VV} =0, ~(\eta_1)_V = \eta_{11}(h,t). \nonumber
	\end{eqnarray}
Consequently, the unknown functions in (\ref{operatorU}) have the following structure
	\begin{eqnarray}\label{condExpNegative}
		\xi_1(l, h, t, V) &=& \xi_{11}(t) l +\xi_{12}(t), ~~\xi_2 (l, h, t, V)=\xi_2(h,t),~~\xi_3 (l, h, t, V)=\xi_3(t), \nonumber \\
		\eta_1 (l, h, t, V)&=&\eta_{11}(h,t) V+ \eta_{12}(h,t).
	\end{eqnarray}
Here $\xi_{11}(t), \xi_{12}(t), \xi_2(t,h),\xi_{3}(t) , \eta_{11}(t,h), $ and $\eta_{12}(t,h)$ are some functions which will be defined later.
To find these unknown functions  we should have a closer look at the nontrivial equations of the obtained system, that are left. After all simplifications, we get the system of seven PDEs
\begin{eqnarray} \label{lieSystemExpNegativ}
{\eta_1}_t+\frac{\eta^2 h^2}{2}{\eta_1}_{hh}+(\mu-\delta)h {\eta_1}_h& = & 0,  \\
 {\xi_3}_t -{\xi_1}_l&=& 0, \nonumber \\ 
 (\mu - \delta) ({\xi_2} - h{\xi_2}_h +h{\xi_3}_t) -\frac{1}{2} \eta^2 h^2 {\xi_2}_{hh} +\eta^2 h^2 {\eta_{11}}_{h} &=& 0, \nonumber \\ 
 h {\xi_3}_t +2(\xi_2-h {\xi_2}_h)&=& 0, \nonumber \\
 \frac{1}{a}\eta_{11} +r \xi_1 - \frac{1}{a} {\xi_1}_t+\delta \xi_2-\frac{1}{a}\frac{\overline{\Phi}_t}{\overline{\Phi}}{\xi_3}&=&0 \nonumber\\ 
  (\alpha- r){\xi_3}_t +2\eta \rho h {\eta_{11}}_{h}&=& 0, \nonumber\\
  (\alpha - r)({\xi_2} - h{\xi_2}_h + h{\xi_3}_t) +\eta \rho \sigma^2 h^2 {\eta_{11}}_{h} &=& 0. \nonumber  
\end{eqnarray}
  We introduce the differential operator $\textbf{L} = \frac{\partial}{\partial t} + \frac{1}{2} \eta^2 h^2 \frac{\partial^2}{\partial h^2} +(\mu-\delta)h \frac{\partial}{\partial h}$ using this operator we can rewrite the first equation in the above system as conditions on the functions $\eta_{11}(h,t)$ and $\eta_{12}(h,t)$  which appears in the last equation of (\ref{condExpNegative}) correspondingly as $\textbf{L}\eta_{11}(h,t)=0$ and $\textbf{L}\eta_{12}(h,t)=0.$ Other equations in the above system do not contained the function $\eta_{12}(h,t)$ at all. If we now denote $\eta_{12}(h,t)=\psi(h,t)$ then we see that we proved the last statement of the theorem, see (\ref{inftyAlExpNegative}).

Solving the  system (\ref{lieSystemExpNegativ})  for an arbitrary function $\overline{\Phi}(t)$ we obtain
\begin{eqnarray} \label{lieSystemEXPn}
\xi_1 = c_{11} e^{r t} + \eta_{11} \frac{1}{ar} -\xi_3 \frac{1}{a} e^{rt}\int{ e^{-rt}  \frac{{\overline{\Phi}}_t}{\overline{\Phi}}dt}, ~~~~~~~~~~~
\xi_2 = 0,~~~~~~~~~~~~~~~~~~~~~\nonumber\\
\xi_3 - const.,~~~~~~\eta_1 = \eta_{11} V + \eta_{122}+\psi(h,t), ~~~~~~~~c_{11},\eta_{11},\eta_{122}-const.~~~~
\end{eqnarray}
The equations (\ref{lieSystemEXPn}) contain four arbitrary constants $\xi_3, c_{11},\eta_{11},\eta_{122}$ and a function $\eta_{12}(h,t)=\psi(h,t)$ which is an arbitrary solution of $\textbf{L}\psi(h,t) = 0$.
Formulas (\ref{lieSystemEXPn}) define four generators of the finite dimensional Lie algebra $L^{EXPn}_4$ (\ref{symalgenaralExpNegativ}) and the infinitely dimensional algebra $L_\infty$ (\ref{inftyAlExpNegative}) as it was described in Theorem~\ref{THEXPn}.
\end{proof} \qed

\begin{remark}
The found four dimensional Lie algebra describes the symmetry property of the equation (\ref{maingeneralExpNegative}) for any function $\overline{\Phi} (t)$.  In \cite{BordagYamshchikovZhelezov}, \cite{Bordag2016}   we have proved the theorem for existence and uniqueness of the solution of HJB equation for a liquidation time distribution for which $\overline{\Phi} (t) \sim e^{-\kappa t}$ or faster as $t \to \infty$, therefore we will regard this type of the distribution studying the analytical properties of the equation further on.
\end{remark}

First we explain the meaning of some generators of the Lie algebra listed in Theorem \ref{THEXPn}.
We start with the second generator $\mathbf U_2 = \frac{\partial}{\partial V}$. It means that the original value function $V(l,h,t)$ which is a solution of the equation (\ref{maingeneralExpNegative}), can be shifted on any constant and still be a solution of the same equation. Neither allocation $\pi$ or consumption function $c$ will change their values because they also depend only on the derivatives of the value function. In some sense it is a trivial symmetry, since the equation (\ref{maingeneralExpNegative}) contains just the derivatives of $V(l,h,t)$ so we certainly can add a constant to this function and it still will be a solution of the equation. Following this symmetry does not give a rise to any reductions of the studied three dimensional PDE and this symmetry do not satisfy the boundary condition $V(l,h,t) \rightarrow 0, ~  t \longrightarrow \infty$ because of that it is not interesting by solving of the possed optimization problem.

The fourth generator $\mathbf U_4 = e^{rt}\frac{\partial}{\partial l}$ means that the value of the independent variable $l$ can be shifted on the arbitrary value $d e^{rt}$, i.e. the shift $l \to l+d e^{rt}$, $d-const.$ leaves the solution unaltered. From economical point of view it means that the absolut value of the initial capital is not important for this problem. We can arbitrary shift the initial liquidity $l$ on a bank account $d, d>0$ or credit $d, d<0$ yet $l+d e^{rt}$ should be positiv in the initial time moment. The value function $V(l,h,t)$ as a solution of the equation (\ref{maingeneralExpNegative}) and the allocation-consumption strategy $(\pi,c)$ will be unaltered. This symmetry is trivial and it does not provide any reductions of the original three dimensional PDE.

We also get the infinitely dimensional algebra $L_{\infty}=<\psi(h,t) \frac{\partial}{\partial V}>$ where the function $\psi(h,t)$ is any solution of the linear PDE
$\psi_t +  \frac{1}{2}\eta^2h^2 \psi_{hh}   + (\mu - \delta) h \psi_{h} =0 $, see Theorem \ref{THEXPn}. It has a special meaning -
we can add any solution $\psi(h,t)$ of this equation to the value function $V(l,h,t)$ without any changes of the allocation-consumption strategy $(\pi,c)$. From economical point of view it means that the additional use of some financial instrument which is the solution of  $ \psi_t +  \frac{1}{2}\eta^2h^2 \psi_{hh}   + (\mu - \delta) h \psi_{h} =0
$ do not change the investment- allocation strategies in this optimization problem. The boundary condition $V(l,h,t) \rightarrow 0, ~  t \longrightarrow \infty$ leads to the following boundary condition on the solution of this equation $\psi(h,t) \rightarrow 0, ~  t \longrightarrow \infty$.
It means it is a financial instrument which value is defined just by the paper value of the illiquid asset and time only, can not change the allocation-consumption strategy $(\pi,c)$.
We notice also that after the substitution $h=e^x,$ $t=-2\tau/\eta^2,$ $\psi(t,h)=h^{(\mu - \delta-\frac{1}{2}\eta^2)/\eta^2}e^{-(\mu - \delta-\frac{1}{2}\eta^2)^2 \tau/\eta^4} v(\tau,x)$ we obtain on the function $v(\tau,x)$ the parabolic  equation of the type $v_t=v_{xx}$, which is well studied. The solution methods as well as the fundamental solution of this equation are well known.

\subsection{Relation between two optimization problems: one  with the HARA and one with the EXPn utility function}\label{HARAtoEXPn}

In our previous papers we studied the optimization problem with an illiquid asset in the case if the investor used   the HARA utility function (\ref{harautility}) or the logarithmic utility function (\ref{HARAlimLOG}). It is well known  that both problems are connected by limiting procedure if $\gamma \to 0$. In the previous paper \cite{BordagYamshchikov2017} we proved that also analytic and Lie algebraic structures of both optimization problems also connected with the same limiting procedure.

We noticed before that the EXPn utility function is connected to the HARA utility function  $U_2^{HARA}(c)$  with the limiting procedure by $\gamma \to \infty$, see (\ref{expnegative}).

It means also that we cannot use directly the results  of the Lie group analysis  obtained in previous works \cite{BordagYamshchikov2017} and \cite{BordagYamshchikovBook} to compare the admitted Lie algebras for the optimization problem with the HARA utility function in the form $U_1^{HARA}(c)$ with the results in this work for an optimization problem with an exponential utility function. Because of that we should recalculate the results of the Lie group analysis for the new form of the HARA utility function.
We remember that we first provide the formal maximization in the HJB equation (\ref{eq:HJB21}) and correspondingly to the chosen utility function we obtain a three dimensional PDE. In our previous works \cite{BordagYamshchikov2017} and \cite{BordagYamshchikovBook} we used the utility function $U_1^{HARA}(c)$ in the form (\ref{harautility}) and got following PDE
\begin{eqnarray}\label{maingeneralHARA}
		 && V_t(t, l, h) +  \frac{1}{2}\eta^2h^2V_{hh} (t, l, h) + (rl + \delta h)V_l (t, l, h) + (\mu - \delta) hV_h (t, l, h)  \nonumber \\
		 &-& \frac{(\alpha - r)^2 V_{l}^2(t,l,h)+2(\alpha - r)\eta \rho h V_{l}(t,l,h) V_{lh}(t,l,h) + \eta^2 \rho^2 \sigma^2 h^2 {V_{lh}}^2 (t,l,h)}{2 \sigma^2 V_{ll}(t,l,h)} \nonumber \\
 		&+& \frac{(1-\gamma)^2}{\gamma}\overline{\Phi}(t)^{\frac{1}{1-\gamma}}V_{l} (t,l,h)^{-\frac{\gamma}{1-\gamma}}-\frac{1-\gamma}{\gamma}\overline{\Phi}(t)=0, ~~~~~ V_{t \to \infty}{\longrightarrow}  0.
	\end{eqnarray}
Now if we insert in  the HJB equation (\ref{eq:HJB21}) the HARA utility function $U_2^{HARA}$ (\ref{harautility2}) then we obtain the PDE in the form
\begin{eqnarray}\label{maingeneralHARA2}
		 && V_t(t, l, h) +  \frac{1}{2}\eta^2h^2V_{hh} (t, l, h) + (rl + \delta h)V_l (t, l, h) + (\mu - \delta) hV_h (t, l, h)  \nonumber \\
		 &-& \frac{(\alpha - r)^2 V_{l}^2(t,l,h)+2(\alpha - r)\eta \rho h V_{l}(t,l,h) V_{lh}(t,l,h) + \eta^2 \rho^2 \sigma^2 h^2 {V_{lh}}^2 (t,l,h)}{2 \sigma^2 V_{ll}(t,l,h)} \nonumber \\
 		&+& \frac{(1-\gamma)^2}{\gamma}\overline{\Phi}(t)^{\frac{1}{1-\gamma}}V_{l} (t,l,h)^{-\frac{\gamma}{1-\gamma}}-\frac{1-\gamma}{a} V_l (t, l, h)=0, ~~~~~ V_{t \to \infty}{\longrightarrow}  0.
	\end{eqnarray}
The equations (\ref{maingeneralHARA}) and (\ref{maingeneralHARA2}) differ analytically  in the last terms, from an economical point of view they describe  equivalent optimization problems. The Lie group analysis of the first equation was provided in \cite{BordagYamshchikov2017}. Now we use the same method and find the admitted Lie group for the equation (\ref{maingeneralHARA2}). We formulate the results of in the following theorem
\begin{theorem}\label{MT2}
	The equation (\ref{maingeneralHARA2}) admits the three dimensional Lie algebra $L^{HARA_2}_3$ spanned by generators $L^{HARA_2}_3=<\mathbf U_1, \mathbf U_2, \mathbf U_3>$, where
	\begin{eqnarray}
\mathbf U_1 = \frac{\partial}{\partial V}, ~~\mathbf U_2 = e^{rt} \frac{\partial}{\partial l},~~
		 \mathbf U_3 =\left( l +\frac{1-\gamma}{ar}\right)\frac{\partial}{\partial l} + h \frac{\partial}{\partial h} +\gamma V  \frac{\partial}{\partial V}, \label{symalharageneral2}
	\end{eqnarray}
for any liquidation time distribution. Moreover, if and only if the liquidation time distribution has the exponential form, i.e. $\overline{\Phi} (t) =d e^{-\kappa t}$, where $d,\kappa$ are constants  the studied equation 	admits a four dimensional Lie algebra $L^{HARA_2}_4$ with an additional generator
\begin{equation} \label{exp4sym}
	\mathbf U_4= \frac{\partial}{\partial t} -\kappa V \frac{\partial}{\partial V},
	\end{equation}
i.e. $L^{HARA_2}_4=<\mathbf U_1, \mathbf U_2, \mathbf U_3, \mathbf U_4>$.\\
Except finite dimensional Lie algebras (\ref{symalharageneral}) and (\ref{exp4sym}) correspondingly equation (\ref{maingeneralHARA2}) admits also an infinite dimensional algebra $L_{\infty}=<\psi(h,t) \frac{\partial}{\partial V}>$ where the function $\psi(h,t)$ is any solution of the linear PDE
\begin{equation}
 		\psi_t( h,t) +  \frac{1}{2}\eta^2h^2 \psi_{hh} ( h,t)  + (\mu - \delta) h \psi_{h} ( h,t)=0.
	\end{equation}
The Lie algebra $L^{HARA_2}_3$ has the following non-zero commutator relations
\begin{equation}
\left[\mathbf U_1, \mathbf U_3\right] =  \gamma \mathbf U_1,~~~~ \left[\mathbf U_2, \mathbf U_3\right] = \mathbf U_2
\end{equation}
The Lie algebra $L^{HARA_2}_4$ has the following non-zero commutator relations
\begin{eqnarray}
\left[\mathbf U_1, \mathbf U_3\right] =  \gamma \mathbf U_1, ~ \left[\mathbf U_1, \mathbf U_4\right] = -\kappa \mathbf U_1, \nonumber\\
 \left[\mathbf U_2, \mathbf U_3\right] = \mathbf U_2, ~~~~ \left[\mathbf U_2, \mathbf U_4\right] = -r \mathbf U_2.
\end{eqnarray}
\end{theorem}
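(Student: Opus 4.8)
The plan is to repeat the jet-bundle procedure already carried out in the proof of Theorem \ref{THEXPn}, now applied to the equation (\ref{maingeneralHARA2}). First I would embed (\ref{maingeneralHARA2}) as an algebraic relation $\Delta(l,h,t,V,V_l,V_h,V_t,V_{ll},V_{lh},V_{hh})=0$ in the second jet bundle $j^{(2)}$ and seek generators of the admitted algebra in the form $\mathbf U = \xi_1 \partial_l + \xi_2 \partial_h + \xi_3 \partial_t + \eta_1 \partial_V$, with coefficients depending on $(l,h,t,V)$. The determining equations then follow from imposing $\mathbf U^{(2)}\Delta|_{\Delta=0}=0$, where $\mathbf U^{(2)}$ is the second prolongation of $\mathbf U$.

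Next I would split the resulting overdetermined system into the trivial part, which fixes the functional dependence of the coefficients, and the genuinely constraining part. As in the EXPn case I expect the trivial equations to force $\xi_1 = \xi_{11}(t)\,l + \xi_{12}(t)$, $\xi_2 = \xi_2(h,t)$, $\xi_3 = \xi_3(t)$ and $\eta_1 = \eta_{11}(h,t)\,V + \eta_{12}(h,t)$. Writing $\eta_{12}(h,t)=\psi(h,t)$ and introducing the linear-parabolic operator $\mathbf{L} = \partial_t + \frac{1}{2}\eta^2 h^2 \partial_{hh} + (\mu-\delta)h\,\partial_h$, the condition $\mathbf{L}\psi=0$ should decouple from the rest and immediately produce the infinite-dimensional algebra $L_\infty$; the companion condition $\mathbf{L}\eta_{11}=0$ constrains the coefficient of $V$.

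The core work is in the remaining nontrivial equations, where the utility contributes the power term $\overline{\Phi}(t)^{1/(1-\gamma)}V_l^{-\gamma/(1-\gamma)}$ together with the linear term $-\frac{1-\gamma}{a}V_l$ that distinguishes $U_2^{HARA}$ from $U_1^{HARA}$. Because $V_l^{-\gamma/(1-\gamma)}$ and $V_l$ are independent monomials in the jet variable $V_l$, matching their coefficients separately pins down the scaling weights and yields the generator $\mathbf U_3$. The affine shift $\frac{1-\gamma}{ar}$ in the $\partial_l$-coefficient of $\mathbf U_3$ is exactly what reconciles the pure $(l,h,V)$-scaling with this extra linear term: the contribution $r\,\frac{1-\gamma}{ar}\,V_l = \frac{1-\gamma}{a}V_l$ coming from the drift $rl\,V_l$ cancels against $-\frac{1-\gamma}{a}V_l$. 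The weight $\gamma$ attached to $V$ in $\mathbf U_3$ is forced by the balancing exponent $-\frac{\gamma}{1-\gamma}$ of the power term, while the constant and $e^{rt}$ solutions of the equation governing $\xi_1$ give the trivial generators $\mathbf U_1 = \partial_V$ and $\mathbf U_2 = e^{rt}\partial_l$.

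The hard part, and the source of the ``if and only if'', is the single determining equation in which $\overline{\Phi}_t/\overline{\Phi}$ appears multiplied by $\xi_3$. Prolonging $\partial_t$ onto the explicitly time-dependent factor $\overline{\Phi}^{1/(1-\gamma)}$ produces a term proportional to $\overline{\Phi}_t/\overline{\Phi}$, which can be absorbed by a compensating scaling of $V$ only when $\overline{\Phi}_t/\overline{\Phi}$ is constant. For a general distribution this forces $\xi_3=0$, leaving precisely the three generators of $L^{HARA_2}_3$ in (\ref{symalharageneral2}); a nonzero time-translation is admissible if and only if $\overline{\Phi}(t)=d\,e^{-\kappa t}$, and then the surviving combination is exactly $\mathbf U_4 = \partial_t - \kappa V\,\partial_V$ of (\ref{exp4sym}), with the $-\kappa V\partial_V$ piece calibrated to the decay rate $\kappa$. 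I would verify this equivalence carefully, since it is the only place where the form of the liquidation-time distribution enters. Finally I would read off the commutators $[\mathbf U_i,\mathbf U_j]$ directly from the vector-field expressions to confirm the stated structure relations for $L^{HARA_2}_3$ and $L^{HARA_2}_4$.
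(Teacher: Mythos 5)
Your route coincides with the paper's own: the paper prints no proof of Theorem \ref{MT2} at all, saying only that it is ``quite similar to the proof of the previous Theorem \ref{THEXPn}'', i.e.\ precisely the jet-bundle/determining-equations computation you outline (the paper also sketches, right after the theorem, the alternative of transporting the known algebra (\ref{symalharageneral}) of equation (\ref{maingeneralHARA}) through the substitution (\ref{transformationHara})). Your structural split of the coefficients, the decoupling of $\mathbf{L}\psi=0$ into $L_\infty$, and your analysis of the ``if and only if'' part are sound: writing the candidate extra generator as $\partial_t+\beta V\partial_V$, the only explicit $t$-dependence sits in $\overline{\Phi}(t)^{1/(1-\gamma)}$, the prolonged action leaves a residue proportional to $\left(\overline{\Phi}_t/\overline{\Phi}-\beta\right)$ times the power term, and this vanishes identically iff $\overline{\Phi}=d\,e^{\beta t}$, which with $\beta=-\kappa$ is exactly (\ref{exp4sym}).

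However, the one step you compute explicitly --- the cancellation that fixes the affine shift in $\mathbf U_3$ --- has the wrong sign bookkeeping, and this is exactly the delicate point of the theorem. Decompose $\mathbf U_3=S+b\,\partial_l$ with $S=l\partial_l+h\partial_h+\gamma V\partial_V$. Under the prolongation of $S$, every term of (\ref{maingeneralHARA2}) carries weight $\gamma$ except the linear term $-\frac{1-\gamma}{a}V_l$, whose weight is $\gamma-1$; hence $S^{(2)}\Delta=\gamma\Delta+\frac{1-\gamma}{a}V_l$, i.e.\ the leftover under pure scaling is \emph{minus} the anomalous term, not the term itself. Since $(b\,\partial_l)^{(2)}\Delta=rb\,V_l$, invariance on $\Delta=0$ forces $rb=-\frac{1-\gamma}{a}$, i.e.\ $b=-\frac{1-\gamma}{ar}$; with your choice $b=+\frac{1-\gamma}{ar}$ the residue is $\frac{2(1-\gamma)}{a}V_l\neq 0$, so the vector field is not a symmetry of (\ref{maingeneralHARA2}) as printed. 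Your slip --- cancelling the drift contribution against the term $-\frac{1-\gamma}{a}V_l$ rather than against its scaling excess $+\frac{1-\gamma}{a}V_l$ --- happens to land on the sign printed in (\ref{symalharageneral2}), but it papers over a genuine inconsistency that a careful execution of your own plan would expose: the paper's substitution (\ref{transformationHara}) carries ${\cal U}_3$ of (\ref{symalharageneral}) into $\left(l-\frac{1-\gamma}{ar}\right)\partial_l+h\partial_h+\gamma V\partial_V$, consistent with the printed equation but not with the printed $\mathbf U_3$; conversely, redoing the maximization of (\ref{eq:Hmax21}) with $U_2^{HARA}$ produces $+\frac{1-\gamma}{a}V_l$ rather than the printed $-\frac{1-\gamma}{a}V_l$, and only for that corrected equation is the theorem's $b=+\frac{1-\gamma}{ar}$ the right shift. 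You must fix one convention and carry it through the determining equations; the cancellation as you state it would not survive them.
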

We will not provide the proof of the Theorem \ref{MT2} because it is quite similar to the proof of the previous Theorem \ref{THEXPn} for the equation (\ref{maingeneralExpNegative}). In the paper \cite{BordagYamshchikov2017} we obtained for the equation (\ref{maingeneralHARA}) with a general liquidation time distribution another three dimensional Lie algebra $L_3^{HARA_1}$  with generators of the following form
\begin{eqnarray}
&&{\cal U}_1 = \frac{\partial}{\partial V}, ~~~~~~~~~~~~~~~~~{\cal U}_2 = e^{rt} \frac{\partial}{\partial l}, \nonumber\\
&&		 {\cal U}_3 = l \frac{\partial}{\partial l} + h \frac{\partial}{\partial h} + \left(\gamma V - (1-\gamma) \int \overline{\Phi}(t)  dt\right) \frac{\partial}{\partial V}, \label{symalharageneral}
	\end{eqnarray}
as well as an infinite dimensional algebra $L_{\infty}=<\psi(h,t) \frac{\partial}{\partial V}>$ where the function $\psi(h,t)$ is any solution of the linear PDE $
 		\psi_t( h,t) +  \frac{1}{2}\eta^2h^2 \psi_{hh} ( h,t)  + (\mu - \delta) h \psi_{h} ( h,t)=0.$
 It is easy to see that both algebras $L^{HARA_1}_3$ and $L^{HARA_2}_3$ have the same commutation relations and are isomorph. We prove that the admitted Lie algebras are also similar. Indeed if we take the substitutions
\begin{eqnarray}\label{transformationHara}
\tilde l =l-\frac{1-\gamma}{a r},~~ \tilde h=h,~~\tilde t=t,~~\tilde V=V +\frac{1-\gamma}{\gamma}\int{\overline{\Phi}(t)}dt
\end{eqnarray}
then the equation (\ref{maingeneralHARA}) on the function $\tilde V(\tilde l,\tilde h,\tilde t)$ will be replaced by the equation (\ref{maingeneralHARA2}) on the value function $V(l,h,t)$. Correspondingly the generators of the algebra (\ref{symalharageneral}) will take the form of the generators of  (\ref{symalharageneral2}). It means  that the Lie algebras $L^{HARA_1}_3$  and $L^{HARA_2}_3$  are isomorph and similar and the optimization problems with the utility functions $U_1^{HARA}$ and with $U_2^{HARA}$ are equivalent not only from an economical and an analytical  but also from the Lie algebraic point of view. Now we have a correct form of generators of the Lie algebra to study a limiting procedure by $\gamma \to \infty$. Indeed using the properties of the generators of the Lie algebra we obtain from (\ref{symalharageneral2})
\begin{eqnarray}
\mathbf U_1^{\infty} = \frac{\partial}{\partial V}, ~~\mathbf U_2^{\infty} = e^{rt} \frac{\partial}{\partial l},~~
		 \mathbf U_3^{\infty} =\frac{1}{ar}\frac{\partial}{\partial l}  - V  \frac{\partial}{\partial V}, \label{symalharageneral2infty}
	\end{eqnarray}
We apply now the limiting transition to the main equation (\ref{maingeneralHARA2}) and see that we do not get the corresponding PDE in the form (\ref{maingeneralExpNegative}), despite the fact that the $U_2^{HARA}$ and EXPn utility are connected with this limiting procedure. We get different analytical structures on this step.

 Now we compare this Lie algebraic structure  with the described in  Theorem \ref{THEXPn}. First we see that both three dimensional PDEs have the same infinite dimensional algebra $L_{\infty}=<\psi(h,t) \frac{\partial}{\partial V}>$. Then we compare the finite dimensional algebra (\ref{symalharageneral2infty}) with (\ref{symalgenaralExpNegativ}) and see that the finite dimensional algebras in these cases are essentially  different. In the case of the $U_2^{HARA}(c)$ utility function  and a general liquidation time distribution we have after limiting transition $\gamma \to \infty$ the three dimensional algebra (\ref{symalharageneral2infty}) and in the case of the EXPn utility function we got the four dimensional algebra (\ref{symalgenaralExpNegativ}). These algebras do not connect with the limiting procedure by $\gamma \to \infty$ as well as the both three dimensional PDEs   (\ref{maingeneralHARA2}) and  (\ref{maingeneralExpNegative}) are not connected with this limiting procedure. Other sides it is easy to see that  all three generators (\ref{symalharageneral2infty}) coincide with the three of the four generators of (\ref{symalgenaralExpNegativ}). Lie algebra (\ref{symalgenaralExpNegativ}) is in some way extension of the Lie algebra (\ref{symalharageneral2infty}). It means that using the exponential utility functions like EXPn or EXPp makes the corresponding optimization problem smoother from the Lie algebraic point of view.

We see that by limiting procedure $\gamma \to \infty$ neither the analytic nor the Lie algebraic structure of the optimization problem will be preserved. If in the previous cases of the HARA  and LOG  utility functions it was sufficient to study the case of the HARA utility and then just take a limit by $\gamma \to 0 $ to obtain the corresponding results for the optimization problem with LOG  utility function now we should study the optimization problem with the exponential utility function in own rights  step-by-step independently from the case of the HARA  utility function.

In the next section we study the optimization problem with the EXPp utility function. Because the utility functions are defined up to additive and multiplicative constants the results should be in some sense equivalent. We will prove this equivalence explicitly for the optimization problems with the EXPn and EXPp utility functions in the next section.

\subsection{Relation between two optimization problems correspondingly one with the EXPn and one with the EXPp utility function}\label{positivenegative}

  The EXPp utility function (\ref{exppositive}) is very close to the  EXPn function (\ref{expnegative}), yet this particular case rather popular therefore we analyze it separately.

The whole approach is very similar to the method described at the beginning of this section therefore we omit some details here. In the case of the EXPp utility function (\ref{exppositive}) the HJB equation after the formal maximization procedure will take the following form
\begin{eqnarray}\label{maingeneralExpPositive}
		 && V_t  +  \frac{1}{2}\eta^2h^2V_{hh}  + (rl + \delta h)V_l + (\mu - \delta) h V_h   \\
		 &-& \frac{(\alpha - r)^2 V_{l}^2 + 2(\alpha - r)\eta \rho h V_{l} V_{lh} + \eta^2 \rho^2 \sigma^2 h^2 {V_{lh}}^2 }{2 \sigma^2 V_{ll}} \nonumber \\
 		&+& \frac{1}{a}V_{l} \ln{V_{l}}  -  \frac{1}{a}\left( 1+\ln{ \overline{\Phi}(t)} \right) V_{l}   + \frac{1}{a}\overline{\Phi}(t)=0 ,~~~~~V \rightarrow 0, ~  t \rightarrow \infty.\nonumber
	\end{eqnarray}
The main PDE  (\ref{maingeneralExpNegative}) for the EXPn utility function differs from this one for the EXPp utility function by one term only. If in (\ref{maingeneralExpNegative}) the last term was $-\left(\frac{1}{a}\ln{a}\right) V_l$ now it is $\frac{1}{a}\overline{\Phi}(t)$, i.e. now equation (\ref{maingeneralExpPositive}) has a free term $\frac{1}{a}\overline{\Phi}(t)$ without the dependent variable $V(l,h,t)$ as it was before.

Analogously to the previous chapter we can formulate and prove the main theorem of Lie group analysis for the HJB optimization problem with the positive exponential utility function.

\begin{theorem}\label{TEXPp}
	The equation (\ref{maingeneralExpPositive}) admits the four dimensional Lie algebra $L^{EXPp}_4$ spanned by generators $\mathbf U_1, \mathbf U_2, \mathbf U_3, \mathbf U_4$, i.e.  $L^{EXPp}_4=<\mathbf U_1, \mathbf U_2, \mathbf U_3, \mathbf U_4>$, where
	 \begin{eqnarray} \label{symalgenaralExpPositiv}
&& \mathbf U_1 =  \frac{1}{a r} \frac{\partial}{\partial l} - \left(   V+ \frac{1}{a} \int{\overline{\Phi}(t)dt} \right) \frac{\partial}{\partial V},~~~~~~~~~~~~~~~~~~~~~~~\mathbf U_2 = \frac{\partial}{\partial V},\\
&& \mathbf U_3 =  -\frac{1}{ar}e^{r t}\left(\int{e^{-r t}d{\ln{\overline{\Phi}(t)}}}  \right)\frac{\partial}{\partial l} + \frac{1}{r} \frac{\partial}{\partial t} - \frac{1}{a r}\overline{\Phi}(t) \frac{\partial}{\partial V},~~~~~~\mathbf U_4 = e^{rt}\frac{\partial}{\partial l}.\nonumber
\end{eqnarray}
for any liquidation time distribution $\overline{\Phi}(t)$.
Except finite dimensional Lie algebra $L^{EXPp}_4$ the equation (\ref{maingeneralExpPositive}) admits also an infinite dimensional algebra $L_{\infty}=<\psi(h,t) \frac{\partial}{\partial V}>$ where the function $\psi(h,t)$ is any solution of the linear PDE
\begin{equation}
 		\psi_t( h,t) +  \frac{1}{2}\eta^2h^2 \psi_{hh} ( h,t)  + (\mu - \delta) h \psi_{h} ( h,t)=0.
	\end{equation}
The Lie algebra $L^{EXPp}_4$ has following two non-zero commutator relations
\begin{equation} \label{nonzeroConEXPp}
\left[\mathbf U_1, \mathbf U_2\right] =  \mathbf U_2,~~~ \left[\mathbf U_3, \mathbf U_4\right] = \mathbf U_4.
\end{equation}
\end{theorem}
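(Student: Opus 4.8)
The plan is to avoid repeating the full determining-equation computation and instead reduce Theorem~\ref{TEXPp} to the already-proven Theorem~\ref{THEXPn} by an explicit point transformation. The two governing PDEs (\ref{maingeneralExpNegative}) and (\ref{maingeneralExpPositive}) coincide in every term except that the linear term $-\frac{\ln a}{a}V_l$ of the EXPn equation is replaced by the $V$-free source $\frac{1}{a}\overline{\Phi}(t)$. First I would show that the affine substitution
\[
\hat l = l - \frac{\ln a}{ar}, \qquad \hat h = h, \qquad \hat t = t, \qquad \hat V = V - \frac{1}{a}\int \overline{\Phi}(t)\,dt
\]
carries (\ref{maingeneralExpNegative}) exactly into (\ref{maingeneralExpPositive}). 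The check is short: the shift of $V$ by $-\frac{1}{a}\int\overline{\Phi}(t)\,dt$ produces $-\frac{1}{a}\overline{\Phi}(t)$ in $\hat V_{\hat t}$, which cancels the source $\frac{1}{a}\overline{\Phi}(t)$, while the constant shift of $l$ turns $(rl+\delta h)V_l$ into $(r\hat l+\delta h)\hat V_{\hat l}+\frac{\ln a}{a}\hat V_{\hat l}$, supplying precisely the missing $V_l$-term. All second-order and logarithmic terms are invariant because the substitution is affine and its $V$-component is independent of $l$.

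Once the equivalence of the two equations is established, the admitted algebra transports automatically, and I would push forward each generator of $L^{EXPn}_4$ from (\ref{symalgenaralExpNegativ}). Since every $\partial_l$-coefficient in (\ref{symalgenaralExpNegativ}) is independent of $l$ and the $V$-shift depends only on $t$, the transport affects only the $\partial_V$-coefficients: re-expressing $V=\hat V+\frac{1}{a}\int\overline{\Phi}(t)\,dt$ inside the $-V\,\partial_V$ part of $\mathbf U_1$ generates the extra piece $-\frac{1}{a}\int\overline{\Phi}(t)\,dt\,\partial_V$, and any generator carrying a $\frac{1}{r}\partial_t$ part acquires $-\xi_3 F'(t)\,\partial_V=-\frac{1}{ar}\overline{\Phi}(t)\,\partial_V$, which is exactly the new term in $\mathbf U_3$. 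This reproduces (\ref{symalgenaralExpPositiv}) verbatim, with $\mathbf U_2,\mathbf U_4$ unchanged, while the infinite-dimensional part $L_\infty$ is untouched because $\psi(h,t)$ enters $V$ only additively and the operator $\textbf{L}$ is unaffected. As a point-equivalence is a Lie-algebra isomorphism, the commutators (\ref{nonzeroConEXPp}) follow immediately from (\ref{nonzeroComEXPn}); alternatively they can be verified in one line from the explicit fields.

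As an independent confirmation — the route that the word ``analogously'' attached to Theorem~\ref{THEXPn} suggests — one can rerun the prolongation directly: impose $\mathbf U^{(2)}\Delta|_{\Delta=0}=0$ and observe that every determining equation except the analog of the fifth equation in (\ref{lieSystemExpNegativ}) is identical, so that again $\xi_2=0$, $\xi_3=\mathrm{const}$, $\xi_1=\xi_{11}(t)l+\xi_{12}(t)$, $\eta_1=\eta_{11}(h,t)V+\eta_{12}(h,t)$ with $\textbf{L}\eta_{11}=\textbf{L}\eta_{12}=0$. I expect the main obstacle to be the careful bookkeeping of the inhomogeneous source $\frac{1}{a}\overline{\Phi}(t)$: under prolongation it contributes $\frac{1}{a}\xi_3\overline{\Phi}_t$ to the residual, and one must confirm that the single modified determining equation is solvable for arbitrary $\overline{\Phi}(t)$ and returns precisely the $\overline{\Phi}$-dependent coefficients in (\ref{symalgenaralExpPositiv}) without forcing any constraint on the distribution. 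This last point is exactly what guarantees — in contrast to the HARA case of Theorem~\ref{MT2}, where the fourth generator exists only for an exponential $\overline{\Phi}$ — that the full four-dimensional algebra $L^{EXPp}_4$ survives for every liquidation time distribution.
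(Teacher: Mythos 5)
Your proposal is correct, but it inverts the logical order of the paper's argument, and the two routes are genuinely different. The paper proves Theorem~\ref{TEXPp} the same way as Theorem~\ref{THEXPn}: it reruns the prolongation and solves the determining system for equation (\ref{maingeneralExpPositive}) directly (this is what ``analogously to the previous chapter'' means), and only \emph{afterwards} observes that the commutation relations (\ref{nonzeroComEXPn}) and (\ref{nonzeroConEXPp}) coincide, which motivates the search for the similarity transformation (\ref{transform}) establishing equivalence of the two problems. You do the opposite: you first verify by hand that the affine substitution $\hat l = l - \frac{\ln a}{ar}$, $\hat V = V - \frac{1}{a}\int\overline{\Phi}(t)\,dt$ maps one PDE onto the other (your substitution is exactly the paper's (\ref{transform}) with the roles of the two coordinate systems swapped), and then obtain Theorem~\ref{TEXPp} as a corollary of Theorem~\ref{THEXPn} by pushing the generators forward; your bookkeeping of the pushforward is right, since $\partial_t \mapsto \partial_{\hat t} - \frac{1}{a}\overline{\Phi}(\hat t)\,\partial_{\hat V}$ produces the new term in $\mathbf U_3$ and re-expressing $V$ in $\mathbf U_1$ produces the term $-\frac{1}{a}\int\overline{\Phi}\,dt\,\partial_{\hat V}$, while $\mathbf U_2$, $\mathbf U_4$ and $L_\infty$ are untouched, and the commutators transport because a pushforward by a diffeomorphism is a Lie algebra isomorphism. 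What your route buys: economy (no second determining-system computation) and the fact that completeness of $L_4^{EXPp}$ for arbitrary $\overline{\Phi}$ is inherited automatically from the completeness established in the proof of Theorem~\ref{THEXPn}, since conjugation by a point transformation maps the full symmetry algebra onto the full symmetry algebra. What the paper's route buys: the two algebras are computed independently, so their isomorphism is a nontrivial \emph{finding} that then serves as the discovery mechanism for (\ref{transform}) --- the paper's Remark stresses precisely that the shift in the liquidity variable $l$ is hard to guess from the affine relation between the utilities alone, whereas your argument must posit that shift ex ante; once posited, of course, your verification is elementary and closes the gap. One cosmetic slip: the term $-\frac{1}{a}\overline{\Phi}$ arising from $\hat V_{\hat t}$ does not ``cancel'' inside the EXPn equation but rather reproduces the source term of (\ref{maingeneralExpPositive}); the computation you describe is nevertheless the correct one.
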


If we compare nontrivial commutation relations (\ref{nonzeroComEXPn}) for $L_4^{EXPn}$ and (\ref{nonzeroConEXPp}) for $L_4^{EXPp}$ we see that they coincide. It means that both algebras $L^{EXPp}_4$ and $L^{EXPn}_4$ are isomorph because they have the same set of the structure constants.  It means also that both Lie symmetry algebras $L^{EXPn}_4$ and $L_4^{EXPp}$ correspond to the type $A_2 \oplus A_2$ after the classification provided in \cite{paterawintern}.

If we can prove that the both Lie algebras are also similar, then  both optimization problems with the EXPp and with the EXPn utility functions are equivalent and we provide at the same time the equivalence substitution.
The similarity of two algebras means that they are not just isomorphic but also  that there exists an analytical substitution that  provided analytical identity between corresponding generators. We can then use this substitution to transform the equation (\ref{maingeneralExpPositive}) to (\ref{maingeneralExpNegative}) or both equations to one the same equation.

It is easy to see that if we make following transformations of the variables $l,h,t,V$ in equation (\ref{maingeneralExpPositive})
\begin{eqnarray}\label{transform}
l=\tilde l -\frac{\ln a}{a r},~~~h=\tilde h,~~~t=\tilde t, \\
V(l,h,t)=\tilde V(\tilde l,\tilde h,\tilde t)-\frac{1}{a}\int{\overline{\Phi}(t)}dt. \nonumber
\end{eqnarray}
then the final equation in variables $\tilde l,\tilde h,\tilde t, \tilde V$ coincide with (\ref{maingeneralExpNegative}).
Because the substitution (\ref{transform}) is an invertible analytical one-to-one substitution we have to do with two identical optimization problems. The analytical and the Lie algebraic structures of the optimization problems with the EXPn and EXPp utility functions are equivalent and it is enough to study one of these problems in detail.

\begin{remark} The EXPp and the EXPn utility functions are connected by an affine transformation. Because of that, it seems that we do not need the Lie group analysis to define transformations like in  (\ref{transform}). If we look at the formula (\ref{valueFun}) we can guess that at liest the transformation of the value function should look like in (\ref{transform}). But the maximization procedure used in formula (\ref{valueFun})  is non-trivial and not all transformations applied to the utility function will be preserved after this procedure. For instance it is rather difficult just by looking at the formula (\ref{valueFun})   to see that the liquidity variable should be also shifted.
 Other sides we have seen that the utility functions $U_1^{HARA}$ and  $U_2^{HARA}$ are connected with an algebraic relation but the corresponding PDEs are connected with the linear transformation (\ref{transformationHara}) which is very similar to (\ref{transform}). In both cases, the Lie group analysis gives us the correct answer to how the corresponding optimization problems are connected.
\end{remark}

\section{Optimal system of subalgebras of $L^{EXPn}_4$ and related invariant reductions of the corresponding three dimensional PDE} \label{optimreductions}
 To find all reductions and in this way to find all classes of the nonequivalent group invariant solutions of a differential equation  Ovsiannikov \cite{Ovsiannikov1982} has introduced the idea of an optimal system of subalgebras for a given symmetry algebra of the differential equation. This idea is now widely used for PDEs and systems of ODEs arising in different areas of sciences \cite{Ovsiannikov1994}, \cite{Meleshko2005}, \cite{Meleshko2008}.

Now we will study a complete set of possible reductions of the three dimensional PDE (\ref{maingeneralExpNegative}) to two dimensional PDEs. For this purpose we need an optimal system of subalgebras of $L_4^{EXPn}$.
As before in \cite{BordagYamshchikov2017} we use an optimal system developed in \cite{paterawintern} for the real four dimensional Lie algebras of this type. To make the comparison of the results transparent we introduce in this Section the same notations for the generators of $L_4^{EXPn}$ as in \cite{paterawintern} and in \cite{BordagYamshchikov2017}.

In the basis (\ref{symalgenaralExpNegativ}) of $L^{EXPn}_4$ there are only two non-zero commutation relations (\ref{nonzeroComEXPn}). If we introduce notations like in the paper \cite{paterawintern}, i.e. we denote $\mathbf U_i=e_i$ where $i=1, \dots,4$ then we can rewrite the relations (\ref{nonzeroComEXPn}) as
\begin{equation} \nonumber
	[e_1,e_2] = e_2,~~~  [e_3, e_4]= e_4.
\end{equation}
Now we can see that $L^{EXPn}_4$ corresponds to the algebras of the type $A_2 \oplus A_2$  in the classification  of \cite{paterawintern} where also optimal systems of subalgebras for all real three and four dimensional solvable Lie algebras are provided. The corresponding system of optimal subalgebras of $L^{EXPn}_4$ is listed in Table \ref{TableOptSystL4EXPn}.

\begin{table}[h]
\begin{center}
\caption{ The optimal system of one-, two- and three- dimensional
subalgebras of $L^{EXPn}_4$,  where $\omega$ is a parameter such that $-\infty < \omega < \infty$. \label{TableOptSystL4EXPn}}
\begin{tabular}{|l|l|}
         \hline
         Dimension of & $~~~~~~~~~~~~~~~~~~~~~$System of optimal subalgebras of algebra $L^{EXPn}_4$ \\
         the subalgebra & \\
         \hline

         1                       & $h_1=\left< e_2 \right>, ~ h_2= \left< e_3 \right>, ~h_3=\left< e_4 \right>, ~h_4=\left< ~e_1 + \omega e_3 \right>, ~h_5=\left< ~e_1 \pm e_4 \right>,$ \\
         			& $h_6=\left< ~e_2 \pm e_4 \right>, ~h_7=\left< ~e_2 \pm e_3 \right> $\\
  \hline
         2                       & $h_ 8=\left < e_1, e_3 \right>, ~h_9=\left <e_1, e_4\right>,~ h_{10}=\left <e_2,e_3\right>,~ h_{11}=\left <e_2,e_4\right>,$\\
         			& $~ h_{12}=\left <e_1+\omega e_3,e_2\right>, ~ h_{13}=\left <e_3 + \omega e_1,e_4\right>, ~ h_{14}=\left <e_1 \pm e_4,e_2\right>,$\\
				& $~ h_{15}=\left <e_3 \pm e_2,e_4\right>, ~ h_{16}=\left <e_1 + e_3,e_2 \pm e_4\right>$\\

         \hline
          3                      & $h_{17}=\left < e_1, e_3, e_2 \right>, h_{18}=\left < e_1, e_4, e_2 \right>, h_{19}=\left < e_1, e_3, e_4 \right>, h_{20}=\left < e_2, e_3, e_4 \right>, $\\
				& $h_{21}=\left < e_1 \pm e_3, e_2, e_4 \right>, h_{22}=\left < e_1 + \omega e_3, e_2, e_4 \right>$\\
         \hline\noalign{\smallskip}
\end{tabular}
\end{center}
\end{table}

Now we are going to study all possible invariant reductions of the main equation (\ref{maingeneralExpNegative}).\\
 Let us first note that the subgroups $H_1$, $H_3$ and $H_6$ generated by subalgebras $h_1 = \left< \frac{\partial }{\partial V} \right>$, $h_3 = \left< e^{rt} \frac{\partial }{\partial l} \right>$ and $h_6 = \left< \frac{\partial }{\partial V} \pm  e^{rt} \frac{\partial }{\partial l} \right>$ correspondingly, do not give us any interesting reductions so we omit the detailed study of these cases here. We start with a first interesting and nontrivial case.

\textbf{Case} $H_2(h_2)$.  The subalgebra $h_2$ is spanned by the generator $e_3$
\begin{equation}
h_2 = < e_3 >= \left< -\frac{1}{a r}\left(e^{r t}\int{e^{-r t}d{\ln{\overline{\Phi}(t)}}}  \right)\frac{\partial}{\partial l} +  \frac{1}{r}\frac{\partial}{\partial t} \right>. \nonumber
\end{equation}
To find all invariants of the subgroup $H_2$ we solve the related characteristic system of equations
\begin{equation}
 \frac{dl}{ -\frac{1}{a r}\left(e^{r t}\int{e^{-r t}d{\ln{\overline{\Phi}(t)}}} \right) } = \frac{dt}{\frac{1}{r}} = \frac{dV}{0}=\frac{dh}{0}, \nonumber
\end{equation}
where the last two equations of the system present a formal notation that shows that the independent variable $h$ and the dependent variable $V$ are actually  invariants  under the action of the subgroup  $H_2$. We can obtain other independent invariants solving the system above. So we obtain  a set of independent invariants
\begin{eqnarray} \label{inv12H2EXPn}
inv_1 &=& z= l + \frac{1}{a r}e^{r t}\int{e^{-r t}d{\ln{\overline{\Phi}(t)}}} - \frac{1}{a r} \ln{\overline{\Phi}(t)}-\frac{1}{ar} (1+\ln a), ~~  inv_2 = {h}, ~~~~~~\\
inv_3 &=& W(z,h) =  V(l,h,t) . \label{inv3H2EXPn}
\end{eqnarray}
The invariants (\ref{inv12H2EXPn}) can be used as the new independent variables $z,h$ and the invariant (\ref{inv3H2EXPn}) as the new dependent variable $W(t,z)$ to reduce the three dimensional PDE (\ref{maingeneralExpNegative}) to a two dimensional one
\begin{eqnarray}\label{2dimEXPnH2}
&&\frac{1}{2} \eta^2 h^2 W_{hh}   + (\mu - \delta) h W_h +(r z + \delta h) W_z +\frac{1}{a} w_z \ln W_z  \\
&-& \frac{(\alpha - r)^2 W_z^2 + 2 (\alpha - r) \eta \rho h W_z  W_{zh}+ \eta^2 \rho^2 \sigma^2 h^2  W_{zh}^2}{2 \sigma^2 W_{zz} } =0. ~~~~~~\nonumber
\end{eqnarray}
In (\ref{cond:main}) we formulate the boundary condition and in Lemma \ref{PVF} we formulate the main  properties of the value function. Now we have to reformulate the boundary condition on the function $W(z,h)$ after the substitution (\ref{inv12H2EXPn}). To make further remarks transparent we take first as an example the simples form of the liquidation time distribution and suppose that $\overline{\Phi}(t)=e^{-\kappa t}$, i.e. we have to do with exponential liquidation time distribution. Then the new variable $z$ will take the form
\begin{equation}\label{z_exp_H2EXPn}
z= l + \frac{\kappa}{a r} t  +\frac{1}{ar} \left(\frac{\kappa}{r}-1-\ln a\right).
\end{equation}
It means that $z$ is increasing if $l$ or $t$ are growing up. But it leads to contradiction between the properties of the function $ W(z,h) =  V(l,h,t)$. One sides the boundary condition demands that the value function tends to zero for $t \to \infty$, other sides that the same function is strictly increasing by $l \to \infty$. Because  after the invariant substitution the new variable $z$ is the sum of these two old variables $l$ and $t$ we are not able to solve this contradiction. The similar inconsistency problem arising if we use another form of the function  $\overline{\Phi}(t)$.
Following this reduction cannot be used to solve  the optimization problem.

\textbf{Case} $H_4(h_4)$. Now we look for invariants of the subgroup $H_4$. The corresponding subalgebra $h_4$ is spanned by the generator $e_1 + \omega e_3$, i.e.
\begin{equation}
h_4 = \left<   \frac{1}{ar} \left(1 -\omega e^{r t}\int{e^{-r t}d{\ln{\overline{\Phi}(t)}}}  \right)\frac{\partial}{\partial l} +  \frac{\omega}{r}\frac{\partial}{\partial t} -  V \frac{\partial}{\partial V}    \right>. \nonumber
\end{equation}
We need to regard two special cases $\omega = 0$ and $\omega \neq 0$ here. If $\omega = 0$ then
\begin{equation}
h_4=<e_1 >=\left<\frac{1}{a r} \frac{\partial}{\partial l} - V \frac{\partial}{\partial V}\right>. \nonumber
\end{equation}
The invariants of the group $H_4$ are
\begin{eqnarray}\label{invH4EXPnspec}
inv_1 = h , ~~~ inv_2 = t ,~~~
inv_3 = W(h,t) = V(l,h,t) e^{a r l} \nonumber
\end{eqnarray}
From the last relation follows that $V(l,h,t)=W(h,t)e^{- a r l}$. We see that in this case the value function has the form $V(l,h,t) = e^{-a r l} W(h,t)$ and the complete dependence on $l$ is described just by the factor $e^{- a r l}$. It means that we obtain a decreasing function $V(l,h,t)$ in the variable $l$ in contradiction to the properties of a value function (see Lemma \ref{PVF}). It means that this reduction do not provide any meaningful solutions for our problem.

Now we can move according to a standard procedure to find the invariants of $H_4$  when $\omega \neq 0$. We obtain three independent invariants using a corresponding characteristic system
\begin{eqnarray}
\label{inv12H4EXPn}
inv_1 &=&z= l + \frac{1}{a r}e^{r t}\int{e^{-r t}d{\ln{\overline{\Phi}(t)}}} - \frac{1}{a r} \ln{\overline{\Phi}(t)}-\frac{t}{a\omega}, ~~  inv_2 = {h}, ~~~~~~\\
inv_3 &=& W(z,h) =  V(l,h,t) e^{\frac{r}{\omega}t}. \label{inv3H4EXPn}
\end{eqnarray}

Analogously substituting expressions for the invariants $z$  as the new independent variable and $W(z, h)$ as the new dependent variables into (\ref{maingeneralExpNegative}) we get
\begin{eqnarray}
&&\frac{1}{2} \eta^2 h^2 W_{hh}   + (\mu - \delta) h W_h +(r z + \delta h) W_z +\frac{1}{a} w_z \ln W_z \nonumber \\
&-& \frac{(\alpha - r)^2 W_z^2 + 2 (\alpha - r) \eta \rho h W_z  W_{zh}+ \eta^2 \rho^2 \sigma^2 h^2  W_{zh}^2}{2 \sigma^2 W_{zz} } \label{2dimEXPnH4} \\
&-&\frac{1}{a}\left(\frac{1}{ \omega}+(1+\ln a) \right) W_z  -\frac{r}{\omega}W=0. ~~~~~~\nonumber
\end{eqnarray}
We prove now a compatibility of the invariant substitutions (\ref{inv12H4EXPn}) and the boundary condition (\ref{cond:main}).
 As before we look for  the new invariant variables (\ref{inv12H4EXPn})-(\ref{inv3H4EXPn}) in the case of exponential liquidation time with $\overline{\Phi}(t)=e^{-\kappa t}$ then these formulas take the form
\begin{eqnarray}
 \label{exp_inv12H4EXPn}
z &=& l  + \frac{\kappa \omega -r}{a r \omega }t+ \frac{\kappa}{a r^2}, ~~~~~  inv_2 = {h}, ~~~~~~\\
 V(t,l,h)&=& W(z,h) e^{-\frac{r}{\omega}t}, ~~~\omega\ne 0. \label{exp_inv3H4EXPn}
\end{eqnarray}
From (\ref{exp_inv12H4EXPn}) follows that if we chose an arbitrary parameter $\omega = r/{\kappa}$ then the variable $z$ up to a constant shift coincides with the old variable $l$. The relation (\ref{exp_inv3H4EXPn}) shows that the boundary condition (\ref{cond:main}) will be satisfied for any solution of (\ref{2dimEXPnH4}). We see also that for other positive values of the parameter $\omega$ the invariant variables (\ref{exp_inv12H4EXPn})-(\ref{exp_inv3H4EXPn}) are compatible with the boundary condition (\ref{cond:main}).

Similar to the case of the exponential time distribution we can study other types of liquidation time distributions. For instance, we look at the frequently  used Weibull distribution with $\overline{\Phi}(t)=e^{-( t/{\lambda})^k}$ where the invariant variables will take the form
\begin{eqnarray}
 \label{Weibull_inv12H4EXPn}
z &=& l  + \frac{k }{a r^{k+1}\lambda^k}e^{r t}\Gamma(k,r t)+ \frac{1}{a r {\lambda}^k} t^k - \frac{1}{a \omega} t, ~~~~~  inv_2 = {h}, ~~~~~~\\
 V(t,l,h)&=& W(z,h) e^{-\frac{r}{\omega}t}, ~~~\omega\ne 0, \label{Weibull_inv3H4EXPn}
\end{eqnarray}
here $\Gamma(k,r t)$ is the upper incomplete gamma function.

For the studied optimization problem the most interesting case appears if the liquidation time distribution has a lokal maximum like we expect it in the real world. The Weibull distribution has a local maximum for the parameter $k>1$. Because of the asymptotic behavior of the expression $e^{r t}\Gamma(k,r t) \to r^{k-1}t^{k-1}$ as $t \to \infty$ we obtain that for $k>1$  the variable  $z\to l+\frac{1}{a r {\lambda}^k} t^k $ as $t \to \infty$. It means that also for a Weibull distribution we have compatibility of the invariant substitutions (\ref{Weibull_inv12H4EXPn})-(\ref{Weibull_inv3H4EXPn}) with the boundary condition (\ref{cond:main}).

We notice that the investment $\pi(z, h)$ and consumption $c(z, t, h)$ in the case $H_4$ look as
\begin{eqnarray}\label{reducedStragy}
	 \pi (z,h) =  \left(  - \frac{\eta \rho \sigma h W_{zh}+(\alpha - r)W_z}{\sigma^2 W_{zz} } \right),~
		 c(z,t, h) = \frac{1}{a} \ln{\left(\frac{\overline{\Phi}(t)}{a W_z }\right)}+\frac{r }{a \omega}t, ~~ \omega > 0, \nonumber
\end{eqnarray}
where  $W(z,h)$ is a solution of the equation (\ref{2dimEXPnH4}).

\textbf{Case} $H_5(h_5)$. According to the first line of  Table \ref{TableOptSystL4EXPn} the subalgebra $h_5$ corresponding to the subgroup $H_5$ algebra is spanned by
\begin{equation}
h_5 = <e_1 \pm e_4> = \left< \left( \frac{1}{a r} \pm e^{rt}  \right)\frac{\partial}{\partial l}  -  V \frac{\partial}{\partial V} \right>. \nonumber
\end{equation}
 Using a standard procedure to determine the invariants of the subgroup $H_5$ we obtain three independent invariants as a solution of the characteristic system, they have a form
\begin{eqnarray}\label{invH5LOG}
inv_1 = h, ~~~ inv_2 = t,~~
inv_3 = v(h,t) = e^{\frac{ar }{1\pm a r e^{r t} }l } V(l,h,t) .
\end{eqnarray}
It means also that the complete dependence of the value function $V(l,h,t)$ on the variable $l$ is described just by the factor $e^{-\frac{ar }{1\pm a r e^{r t} }l } $. If $t> \frac{1}{r}\ln (ar)$ then the value function will be  decreasing function in $l$ by choosing the plus sign in the denominator of the fraction $-\frac{ar }{1\pm a r e^{r t} }l$ and it will be increasing function in $l$ if we choose the  minus sign in the denominator of the fraction.

Because the value function for the optimization problem should be increasing function in $l$  so we need to study just this one case. Therefore we choose as a new dependent variable the function $v(h,t) = e^{\frac{ar }{1- a r e^{r t} }l } V(l,h,t)$.
Substituting the new dependent variable $v(h,t)$  into (\ref{maingeneralExpNegative}) we get a two dimensional PDE
\begin{eqnarray}\label{2dimEXPnH5}
&v_t +\frac{1}{2}\eta^2 h^2 v_{hh} +(\mu -\delta)hv_h +\frac{r }{ a r e^{r t} -1}\left(\left(a \delta h -1+\ln{\left( \frac{r \overline{\Phi}(t)}{ a r e^{r t} -1}      \right)}\right) v+v \ln v\right)\nonumber \\
&-\frac{(\alpha -r)^2}{2 \sigma^2 }v -\frac{(\alpha -r)\eta \rho h}{\sigma^2}v_h -\frac{(a r e^{r t} -1)^2\eta^2 \rho^2 h^2 }{2 a^2 r^2 }\frac{v_{h}^2}{v}=0, ~~~v(h,t)_{t \to \infty}{\longrightarrow} 0.\nonumber
\end{eqnarray}
 After Lemma \ref{PVF} the value function $V(l,h,t)$ cannot have an exponential growth in $l$ like we obtain it now. It means that the invariant substitution (\ref{invH5LOG}) is inconsistent with the possed optimization problem.

\textbf{Case} $H_7(h_7)$. The last one dimensional subalgebra in the list of the optimal system of subalgebras in Table \ref{TableOptSystL4EXPn} is spanned by $e_2 \pm e_3 $

\begin{equation}
h_7 = <e_2 \pm e_3 > = \left<\pm \left( -\frac{1}{a r}\left(e^{r t}\int{e^{-r t}d{\ln{\overline{\Phi}(t)}}}  \right)\frac{\partial}{\partial l} +  \frac{1}{r}\frac{\partial}{\partial t}       \right)+ \frac{\partial}{\partial V} \right>.\nonumber
 \end{equation}

 According to a standard procedure, we obtain following invariants of the subgroup $H_7$
\begin{eqnarray}\label{inv12H7EXPn}
inv_1 &=& z=l + \frac{1}{a r}e^{r t}\int{e^{-r t}d{\ln{\overline{\Phi}(t)}}} - \frac{1}{a r} \ln{\overline{\Phi}(t)}, ~~  inv_2 = {h}, ~~~~~~\\
inv_3 &=& W(z,h) =  V(t,l,h) \mp r t. \label{inv3H7EXPn}
\end{eqnarray}
Using these invariants (\ref{inv12H7EXPn}), (\ref{inv3H7EXPn}) as the new  variables $z,h, W(z,h)$  and substituting them into (\ref{maingeneralExpNegative}) we obtain a two dimensional PDE on $W(z,h)$
\begin{eqnarray}\label{2dimEXPnH7}
&&\frac{1}{2} \eta^2 h^2 W_{hh}   + (\mu - \delta) h W_h +(r z + \delta h) W_z +\frac{1}{a} w_z \ln W_z  \\
&-& \frac{(\alpha - r)^2 W_z^2 + 2 (\alpha - r) \eta \rho h W_z  W_{zh}+ \eta^2 \rho^2 \sigma^2 h^2  W_{zh}^2}{2 \sigma^2 W_{zz} } + \frac{1}{a}(1+\ln a)W_z \pm r=0 ~~~~~~\nonumber
\end{eqnarray}
 In this case we see the inconsistence between the boundary condition (\ref{cond:main}) which demands that $ V(t,l,h) \to 0$ as $t\to \infty$ and the invariant substitutions (\ref{inv12H7EXPn}), (\ref{inv3H7EXPn}) which say that the expression $ V(t,l,h) \mp r t$ depends just on $z,h$ and not from the variable $t$.

  \vspace{5pt}
Totally there are four meaningful reductions of the three dimensional PDE (\ref{maingeneralExpNegative}) for the case of the EXPn utility function and the general liquidation time distribution $\overline{\Phi}(t)$ by using one dimensional subalgebras of the algebra $L_4^{EXPn}$. Just one of these reductions which corresponds to the case $H_4$ with $\omega \ne 0$, i.e.  the substitutions (\ref{inv12H4EXPn}), (\ref{inv3H4EXPn}) are consistent with the boundary condition (\ref{cond:main}) and the two dimensional PDE (\ref{2dimEXPnH4}) is a corresponding reduction. This equation can be studied further with numerical methods.
  \vspace{5pt}

 In Table~\ref{TableOptSystL4EXPn} are listed also two and three dimensional subalgebras of $L^{EXPn}_4$. Using these subalgebras maybe we can find the deeper reductions of the PDE (\ref{maingeneralExpNegative}) for instance  to  ordinary differential equations.

\textbf{Case} $H_8(h_8)$. We take the first two dimensional subalgebra listed in Table~\ref{TableOptSystL4EXPn}, i.e. the  subalgebra $h_8 = < e_1, e_3 >$. We rewrite the characteristic systems to the first generator $e_1$  in terms of the invariants of $e_3$ (\ref{inv12H2EXPn}), (\ref{inv3H2EXPn}) then $e_1$ takes the form  $e_1=\frac{1}{a r }\frac{\partial}{\partial z} - W \frac{\partial}{\partial W}.$ Solving a corresponding characteristic system we obtain a new invariant
 \begin{equation}
inv_{e_1} = v(h) =   W(z,h)e^{ar z} ,\nonumber
\end{equation}
 which we use now as a new dependent variable to reduce the equation (\ref{2dimEXPnH2}) to an ODE
 \begin{eqnarray}\label{ODEh8EXPn}
&\frac{1}{2}\eta^2 h^2 v^{''}-\frac{\eta^2 \rho^2  }{2 }h^2 \frac{\left(v^{'}\right)^2}{v}
+\left(\frac{(\mu -\delta)\sigma^2-(\alpha -r)\eta \rho }{\sigma^2}\right)h v^{'} \nonumber\\
& - \left(a r  \delta h
+\frac{(\alpha -r)^2}{2 \sigma^2 }\right)v
-r v \ln{(-a r v)}=0.
\end{eqnarray}
In terms of original variables $l, h, t$ and $V(l,h,t)$ the substitution looks as follows
\begin{eqnarray} \label{subh8EXPn}
 V (l,h,t) &=& v(h)e^{- ar z},\\
z &=&  l + \frac{1}{a r}e^{r t}\int{e^{-r t}d{\ln{\overline{\Phi}(t)}}} - \frac{1}{a r} \ln{\overline{\Phi}(t)}-\frac{1}{ar} (1+\ln a).\nonumber
\end{eqnarray}
Now we obtain a reduction of the three dimensional PDE (\ref{maingeneralExpNegative}) to an ODE.
 But we cannot use this reduction, because it is inconsistent with the properties of the value function $V(l,h,t)$ listed in the Lemma~\ref{PVF}. The value function is an increasing function in variable $l$ and $V(l,h,t)>0$, it means also $v(h)$ should be a positive function. From the first expression in (\ref{subh8EXPn}) follows that $V(l,h,t)$ is decreasing in $z$ and following in the variable  $l$ and from the equation (\ref{ODEh8EXPn}) follows that the expression $\ln{(-a r v)}$ is well defined just for negative functions $v(h)$.

\textbf{Case} $H_{12}(h_{12})$. Similar to the previous case we study now the case of  $h_{12} = < (e_1+\omega e_3), e_2 >$ and after the substitution  \begin{eqnarray}
\label{inv12H12EXPn}
  V(t,l,h) &=& v(h)\exp{\left(-a r l-  e^{r t}\int{e^{-r t}d{\ln{\overline{\Phi}(t)}}} +  \ln{\overline{\Phi}(t)}\right)}
\end{eqnarray}
we obtain an ODE on the function $v(h)$
\begin{eqnarray}\label{ODEh12EXPn}
&\frac{1}{2}\eta^2 h^2 v^{''}-\frac{\eta^2 \rho^2  }{2 }h^2 \frac{\left(v^{'}\right)^2}{v}
+\left(\frac{(\mu -\delta)\sigma^2-(\alpha -r)\eta \rho }{\sigma^2}\right)h v^{'} \nonumber\\
& - \left(a r  \delta h
+\frac{(\alpha -r)^2}{2 \sigma^2 }\right)v +r(1+\ln a)v
-r v \ln{(-a r v)}=0.
\end{eqnarray}
If we can find a positive solution to this equation then we get the solution to the original optimization problem. It is easy to see that the last term in the equation (\ref{ODEh12EXPn}) will be complex-valued if the function $v(h)>0$.  It means that it is not possible to find a positive solution of this equation. This reduction is not compatible with the conditions possed on the optimization problem. Like in the previous case we see that also other properties of the value function listed in the Lemma~\ref{PVF}  cannot be satisfied if the value function takes the form (\ref {inv12H12EXPn}).

All other two -  and three -  dimensional subalgebras listed in Table~\ref{TableOptSystL4EXPn} do not give any meaningful  reductions of the original equation (\ref{maingeneralExpNegative}), so we will not regard them in detail.

We studied the complete set of all possible reductions of the original three dimensional PDE (\ref{maingeneralExpNegative}) to simpler differential equations. We see that not all of the reductions are reasonable for the optimization problem. Just one of them represented by the two dimensional PDE  satisfies all conditions.  It is the main result of this Section and we formulate this result as a theorem
 \begin{theorem}\label{reductiontheorem5}
The main three dimensional HJB equation (\ref{maingeneralExpNegative}) admits the unique symmetry  reduction to the two dimensional PDE (\ref{2dimEXPnH4}) after the substitutions (\ref{exp_inv12H4EXPn}) - (\ref{exp_inv3H4EXPn}) which satisfies all conditions of the posed optimization problem. The corresponding investment - consumption strategies are given in (\ref{reducedStragy}).
\end{theorem}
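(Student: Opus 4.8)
The plan is to organize the proof around Ovsiannikov's notion of an optimal system of subalgebras \cite{Ovsiannikov1982}. Since Theorem \ref{THEXPn} identifies $L_4^{EXPn}$ as an algebra of type $A_2 \oplus A_2$, the complete optimal system of one-, two- and three-dimensional subalgebras is already classified in \cite{paterawintern} and reproduced in Table \ref{TableOptSystL4EXPn}. The decisive structural fact I would invoke is that every group-invariant reduction of (\ref{maingeneralExpNegative}) is, up to conjugation by the adjoint action of the admitted group, equivalent to a reduction built from exactly one of the subalgebras $h_1,\dots,h_{22}$. This reduces the uniqueness claim to a finite, exhaustive verification: for each subalgebra I would solve the associated characteristic system, read off the invariants, substitute them into (\ref{maingeneralExpNegative}), and then test the resulting reduced equation against the two admissibility requirements, namely the boundary condition (\ref{cond:main}) and the qualitative properties of the value function collected in Lemma \ref{PVF}.

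For the one-dimensional subalgebras, which produce two-dimensional reduced PDEs, I would dispatch the cases as follows. The subalgebras $h_1,h_3,h_6$ involve only the trivial shifts $\partial_V$ and $e^{rt}\partial_l$ and yield no genuine lowering of dimension. For $h_2$ the invariant $z$ is an additive combination of $l$ and $t$, so the requirement that $V$ be strictly increasing in $l$ (Lemma \ref{PVF}(ii)) collides with the decay $V\to 0$ as $t\to\infty$; for $h_4$ with $\omega=0$ the reduction forces $V(l,h,t)=e^{-arl}W(h,t)$, which is decreasing in $l$; for $h_5$ the invariant carries an exponential $l$-dependence violating the polynomial growth bound in Lemma \ref{PVF}(iv); and for $h_7$ the invariant $W=V\mp rt$ being $t$-independent forces $V\sim\pm rt$, which cannot vanish as $t\to\infty$. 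Each contradiction eliminates the corresponding reduction. Only $h_4$ with $\omega\neq 0$ survives: the factor $e^{-rt/\omega}$ in (\ref{inv3H4EXPn}) enforces the boundary condition, while the choice $\omega=r/\kappa$ makes the invariant $z$ coincide with $l$ up to a constant shift, so all monotonicity and positivity requirements of Lemma \ref{PVF} are inherited by $W$, and (\ref{2dimEXPnH4}) is the reduction claimed.

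To complete the uniqueness statement I would also rule out the deeper reductions coming from the two- and three-dimensional subalgebras, since a priori one of these might produce an admissible ordinary differential equation. The representative computations for $h_8$ and $h_{12}$ already exhibit the obstruction: the invariant forces $V=v(h)e^{-arz}$, which is again decreasing in $l$, and moreover the reduced ODEs (\ref{ODEh8EXPn}), (\ref{ODEh12EXPn}) contain the term $\ln(-arv)$, well defined only for $v<0$, in direct conflict with the positivity $V>0$ of Lemma \ref{PVF}. The same pair of defects, wrong monotonicity in $l$ and a logarithm requiring a sign-negative argument, recurs for every remaining two- and three-dimensional subalgebra, so none of them yields an admissible reduction. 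Finally, the investment-consumption strategies (\ref{reducedStragy}) follow by substituting the $H_4$ invariants (\ref{inv12H4EXPn})--(\ref{inv3H4EXPn}) into the general expressions (\ref{pi:maingeneralExpNegative})--(\ref{c:maingeneralExpNegative}) and simplifying.

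The genuine difficulty, rather than any single calculation, is the \emph{completeness} of the elimination: I must be confident that Table \ref{TableOptSystL4EXPn} exhausts all subalgebras up to equivalence, so that no admissible reduction is overlooked, and that the consistency tests are applied to every case. The most delicate of those tests is the Weibull distribution in case $H_4$, where one needs the asymptotics $e^{rt}\Gamma(k,rt)\sim r^{k-1}t^{k-1}$ as $t\to\infty$ to confirm that $z\to l+\frac{1}{ar\lambda^k}t^k$ grows monotonically, so that strict monotonicity of $V$ in $l$ is preserved while the decay factor respects the boundary condition; establishing this asymptotic behaviour carefully is the step I would treat with the most attention.
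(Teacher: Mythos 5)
Your proposal follows essentially the same route as the paper: it invokes the Patera--Winternitz optimal system in Table \ref{TableOptSystL4EXPn}, eliminates $h_1,h_3,h_6$ as trivial, rules out $h_2$, $h_4$ with $\omega=0$, $h_5$, $h_7$, $h_8$ and $h_{12}$ by exactly the same conflicts with the boundary condition (\ref{cond:main}) and Lemma \ref{PVF} (wrong monotonicity in $l$, exponential growth, $t$-dependence, and the $\ln(-arv)$ sign obstruction), and isolates $H_4$ with $\omega\neq 0$ as the unique admissible reduction, including the check $\omega=r/\kappa$ for the exponential distribution and the asymptotics $e^{rt}\Gamma(k,rt)\sim r^{k-1}t^{k-1}$ for the Weibull case. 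The only point where you go slightly beyond the paper is in asserting explicitly that the same defects recur for all remaining two- and three-dimensional subalgebras, which the paper likewise states without detailed verification.
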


\section{Conclusion}\label{conclusion}

In this paper we study a portfolio optimization problem for a basket consisting of a risk free liquid, risky liquid and risky illiquid assets where the investor prefer to use an exponential utility function. The illiquid asset is sold in a random moment $T$ with a known distribution of the liquidation time. It is a distribution with a survival function $\overline{\Phi} (t)$, satisfying very general conditions  $\lim_{t \to \infty} \overline{\Phi} (t) E[U(c(t))]=0$ and $\overline{\Phi} (t) \sim e^{-\kappa t}$ or faster as $t \to \infty$. Typically one suppose that the liquidation time distribution is an exponential one, i.e. $\overline{\Phi} (t) = e^{-\kappa t}$, $ t\ge 0, ~\kappa>0, ~$ or of the Weibull type with $\overline{\Phi} (t)=e^{-(t/\lambda)^k}$, with  $t\ge 0,~ k>0,~ \lambda >0$. The Weibull distribution turns to the exponential distribution by $k=1$ and it can be understand as  a generalization of the exponential distribution. Based on the economical motivation we choose $k>1$ because of in this case the Weibull  probability density function has a local maximum.

Before in  papers   \cite{BordagYamshchikov2017}, \cite{Bordag2016}  we studied   similar portfolio optimization problems where the investor used the HARA  and LOG  utility functions correspondingly instead the exponential utility function as in this paper.

Both  the HARA  utility function  as well as the LOG  utility function were widely used before in  optimization problems with a random income and for different settings of the portfolio optimization problems. Usually it was going on the optimization  problems with a portfolio that includes an illiquid asset that was sold in a deterministic moment of time, i.e. on a portfolio optimization problem with a finite time horizon. Other authors  supposed that the illiquid asset is not sold  at all, i.e. they studied a portfolio optimization problem with the infinite time horizon. In previous papers \cite{BordagYamshchikov2017}, \cite{Bordag2016} we demonstrated the connection between these  two problems. We also shown that for $\gamma \to 0$  we obtain ${U_1^{HARA}}_{\gamma \to 0}{\longrightarrow} U^{LOG}$ as well as  a three dimensional HJB equation (\ref{eq:HJB21}) corresponding to the HARA utility function formally transforms into the HJB equation with the LOG utility function. Then we proved independently from the form of the survival function $\overline{\Phi} (t)$ that the Lie algebraic structure of the PDE with Lie logarithmic utility can be seen as a limit of the algebraic structure of the PDE with the HARA utility function as $\gamma \to 0$.

Now we provided a complete Lie group symmetry analysis for two different exponential utility functions, EXPn and EXPp, i.e. for two three dimensional PDEs (\ref{maingeneralExpNegative}) and (\ref{maingeneralExpPositive}) which contain an arbitrary function $\overline{\Phi} (t)$. The results  are formulated in Theorem~\ref{THEXPn} and Theorem~\ref{TEXPp}.
We obtained that each of these PDEs admitted the four dimensional Lie algebras $L_4^{EXPn}$ and $L_4^{EXPp}$ correspondingly. These algebras are isomorph and similar, it means that the studied PDEs (\ref{maingeneralExpNegative}) and (\ref{maingeneralExpPositive}) are equivalent up to the one-to-one analytical substitution. In other words the optimization problems are identical from any point of view: an economical, analytical or Lie algebraic one.

We also investigated a connection between the optimization problem with the HARA utility function (\ref{harautility2}) and with the EXPn utility function in Section~\ref{HARAtoEXPn}. Even though the HARA utility function is connected to the negative exponential utility function by $\gamma \to \infty$ as we mentioned in (\ref{expnegative}) we do not get the expected connection between the corresponding optimization problems. Instead of that we obtain quite different structures of the invariant variables by the study of the symmetry reductions of the main equation (\ref{maingeneralExpNegative}). In the case of the HARA utility function a typical invariant variables were the fraction $\frac{l}{h}$ and time $t$. It means that in the case of the HARA or LOG utility function the value function depends in the first place from the relation between the values of the liquid  and illiquid assets. It is completely independent of the absolute value of his liquid part or from the absolute value of his illiquid part of wealth. For instance, the investor in the HARA  case, as we proved it before, should increase his consumption rapidly if the relation l/h falls, independently how many millions of dollars the investor has as liquid part at the moment.

Here in the case of the exponential utility function the situation is quite different.  As follows from equation (\ref{inv12H4EXPn}), the behavior of the investor depends now on two variables, on the value of the illiquid asset $h$ and on the combined variable $z$ which contains the liquid part of wealth and an economically modulated time. As a consequence, the absolute value of the illiquid part of wealth plays a large role. The variable $z$ tells us that the influence of a large amount of a liquid asset plays the same role as the possibility to wait a long time.
By the way,  this difference in the behavior of the invariant variables and the radical change of the investment – consumption strategies  is to explain  by the fact that the risk tolerance in the case of the HARA utility function is a linear function of the consumption $c$ and in the case of the exponential utility the risk tolerance is just a constant $R(c)=\frac{1}{a}$.

A further difference between the optimization problems with the HARA  and EXPn  utility functions is related to the structure of the admitted Lie algebras. In the cases of the HARA  and LOG  utility functions the corresponding three dimensional PDEs have admitted   three dimensional main Lie algebras. Just by the special choice of a liquidation time distribution,  i.e. only for the exponential function $\overline{\Phi} (t) = e^{-\kappa t}$ we got an extension of these Lie algebras to the four dimensional ones. Here in the cases of the exponential utility functions, independently EXPn or EXPp, we obtain from the beginning the four dimensional Lie algebras as the symmetry algebras of the corresponding PDEs. It is remarkable that in these cases the four dimensional Lie algebras do not allow any extension independently from the form of $\overline{\Phi} (t)$. It can be seen by the  solving of the  system of equation (\ref{lieSystemExpNegativ}) in the proof of Theorem~\ref{THEXPn}.

 In the previous paper \cite{BordagYamshchikov2017} we proved that the algebra $L^{LOG}_4$ can be obtained as a limit case of $L^{HARA}_4$ by $\gamma \to 0$. Here we see that $L^{HARA_2}_3$  (or correspondingly $L^{HARA_2}_4$) and $L_4^{EXPn}$ are quite different and they do not connect by $\gamma \to \infty$ as well as they do not
have any connections between analytical structures of their generators independently on the form of the liquidation time distribution.

In our paper we pay attention to the internal structure of the admitted algebra $L_4^{EXPn}$  to obtain convenient and useful reductions of the main equation (\ref{maingeneralExpNegative}). Further on we use the system of optimal subalgebras provided in \cite{paterawintern} and get corresponding nonequivalent invariant reductions of the three dimensional PDEs (\ref{maingeneralExpNegative}) to  two dimensional PDEs.  They describe the complete set of solutions that can not be transformed into each other with  the help of the transformations of the admitted symmetry group. We show that the three dimensional PDE can be reduced to a corresponding two dimensional ones in Section~\ref{optimreductions}. The low dimensional PDEs are much more convenient for further analytical or numerical studies. We also provide the formulas for the optimal investment-consumption policies in invariant variables using solutions of the reduced equation. We demonstrate that between meaningful reductions there exists one  (\ref{2dimEXPnH4}) which is consistent with the boundary condition (\ref{cond:main}) and with the expected properties of the value function.

We remark also a different level of influence of the  parameters on the HJB equation and on the admitted  Lie algebraic structure. The HJB equation contains seven parameters $ r,\alpha, \sigma$, $\mu, \delta, \eta, \rho$ which define the behavior of the  liquid and illiquid asset, and one parameter $a$ which is fixed by the exponential utility function. There are also some parameters which define the liquidation time distribution, for instance, it is the parameter $\kappa$ if we take the exponential distribution with $\overline{\Phi} (t) = e^{-\kappa t}$ or two parameters $\lambda$ and $k$ if we take the Weibull distribution with $\overline{\Phi} (t)=e^{-(t/\lambda)^k}$. If we look at the structure of the Lie algebras provided in  Theorem~\ref{THEXPn} and Theorem~\ref{TEXPp} we see that the generators of the  algebras are defined by the parameters $r, a$ and parameters of the liquidation time distribution only. The algebras change their structure if one or some of these parameters vanishing. Roughly said the most influence on the form of the solution of this optimization problem has interest rate $r$, the type of the investor's utility function and a  marked defined liquidation time distribution for the illiquid asset. These parameters define the invariant variables and the analytical structure of the solutions.

 Summing up, we carry a complete Lie group analysis for the optimization problems with negative and positive exponential utility functions and for a general  liquidation time distribution. We determine the reduced equation and corresponding optimal policies as it formulates in the Theorem~\ref{reductiontheorem5}.

\begin{acknowledgements}
The author is thankful to prof. L. Vostrikova-Jacod for interesting discussions and for organizing a very successful conference {\sf Advanced Methods in Mathematical Finance, Angers, 2018, France}, where the author got the idea to write this paper.
\end{acknowledgements}

%


\begin{thebibliography}{}

\bibitem{Bordag2015}
L.~A. Bordag.
\newblock {\em Geometrical properties of differential equations. Applications
  of Lie group analysis in Financial Mathematics}.
\newblock World Scientific Publishing, Singapore (2015)

\bibitem{BordagYamshchikovBook}
L.~A. Bordag and I.~P. Yamshchikov.
\newblock Lie group analysis of nonlinear black-scholes models.
\newblock In M.~G\"unther E. Jan W. ter Maten~(Eds.) Ehrhardt, M., editor, {\em
  Novel Methods in Computational Finance}, pages 109 -- 128. Springer,
  Switzerland (2017)

\bibitem{BordagYamshchikov2017}
L.~A. Bordag and I.~P. Yamshchikov.
\newblock Optimization problem for a portfolio with an illiquid asset: Lie
  group analysis.
\newblock {\em Journal of Mathematical Analysis and Applications}, 453, 668 --
  699 (2017)

\bibitem{BordagYamshchikovZhelezov}
L.~A. Bordag, I.~P. Yamshchikov, and D.~Zhelezov.
\newblock Portfolio optimization in the case of an asset with a given
  liquidation time distribution.
\newblock {\em International Journal of Engineering and Mathematical
  Modelling}, 2(2), 31 -- 50 (2015)

\bibitem{Bordag2016}
L.~A. Bordag, I.~P. Yamshchikov, and D.~Zhelezov.
\newblock Optimal allocation-consumption problem for a portfolio with an
  illiquid asset.
\newblock {\em International Journal of Computer Mathematics}, 93(5), 749--760
  (2016)
\newblock DOI: 10.1080/00207160.2013.877584.

\bibitem{Bouchard2004}
B. Bouchard and H. Pham.
\newblock Wealth-path dependent utility maximization in incomplete markets,
\newblock {\em Finance and Stochastics}, 8(4), 579--603 (2004)




\bibitem{DiazEsparcia2019}
A. Diaz and C. Esparcia.
\newblock Assessing risk aversion from the investor's point of view.
\newblock {\em Frontiers in Psychology}, 02 July (2019)
\newblock DOI: 10.3389/fpsyg.2019.01490


\bibitem{elkaroui:hal-00708493}
N. El Karoui, C. Blanchet-Scalliet, M. Jeanblanc and L. Martinelli.
\newblock Optimal investment decisions when time-horizon is uncertain.
\newblock {\em Journal of Mathematical Economics}, 44(11), 1100-1113 (2008)


\bibitem{Ibragimov1994}
N.~H. Ibragimov.
\newblock {\em {Lie group analysis of differential equations}}.
\newblock CRS Press, Boca Raton (1994)

\bibitem{Meleshko2005}
S.~V. Meleshko.
\newblock {\em Methods for Constructing Exact Solutions of Partial Differential
  Equations: Mathematical and Analytical Techniques with Applications to
  Engineering}.
\newblock Springer, New York (2005)

\bibitem{Meleshko2008}
S.~V. Meleshko and S.~Moyo.
\newblock On the complete group classification of the reaction–diffusion
  equation with a delay.
\newblock {\em Journal of Mathematical Analysis and Applications},
  338(1), 448--466 (2008)

\bibitem{Merton1969}
R.~Merton.
\newblock Lifetime portfolio selection under uncertainty: The continuous-time
  case.
\newblock {\em The Review of Economics and Statistics}, 51(3), 247--257 (1969)

\bibitem{Monin2014}
Phillip Monin.
\newblock Hedging market risk in optimal liquidation.
\newblock {\em SSRN Electronic Journal}, 01 (2014)

\bibitem{MoninZa2014}
Phillip Monin and Thaleia Zariphopoulou.
\newblock On the optimal wealth process in a log-normal market: Applications to
  risk management.
\newblock {\em Journal of Financial Engineering}, 01(02), 1450013 (2014)

\bibitem{Olver}
P.~J. Olver.
\newblock {\em Applications of Lie groups to differential equations}.
\newblock Springer Science \& Business Media, New York (2000)

\bibitem{Ovsiannikov1982}
L.~V. Ovsiannikov.
\newblock {\em {Group Analysis of Differential Equations}}.
\newblock Academic Press, New York (1982)

\bibitem{Ovsiannikov1994}
L.~V. Ovsiannikov.
\newblock The programm "podmodeli". gas dynamics.
\newblock {\em Journal of Applied Mathematics and Mechanics}, 58(4), 601--627
  (1994)

\bibitem{paterawintern}
J.~Patera and P.~Winternitz.
\newblock {Subalgebras of real three- and four-dimensional Lie algebras}.
\newblock {\em Journal of Mathematical Physics}, 18(7), 1449--1455 (1977)

\bibitem{Schied2009}
A.~Schied and T.~Sch\"{o}neborn.
\newblock {Risk aversion and the dynamics of optimal liquidation strategies in
  illiquid markets}.
\newblock {\em Finance and Stochastics}, 13(2), 181--204 (2009)

\end{thebibliography}
\end{document}